\long\def\ca#1\cb{} 
\newcommand{\ket}[1]{|#1\rangle}               
\newcommand{\bra}[1]{\langle #1|}              
\newcommand{\dya}[1]{\ket{#1}\!\bra{#1}}
\newcommand{\dyad}[2]{\ket{#1}\!\bra{#2}}        
\newcommand{\ip}[2]{\langle #1|#2\rangle}      
\newcommand{\EC}{\mathcal{E}}
\newcommand{\FC}{\mathcal{F}}
\newcommand{\HC}{\mathcal{H}}
\newcommand{\IC}{\mathcal{I}}
\newcommand{\XC}{\mathcal{X}}
\newcommand{\Tr}{{\rm Tr}}
\renewcommand{\geq}{\geqslant}
\renewcommand{\leq}{\leqslant}
\newcommand{\eqprop}[2]{\stackrel{\tiny{#1}}{#2}}
\newcommand{\ot}{\otimes}
\newcommand{\ad}{^\dagger}
\newcommand*{\id}{\openone}
\newcommand*{\iso}{\cong}
\newcommand{\al}{\alpha }
\newcommand{\lm}{\lambda }
\newcommand{\sg}{\sigma }
\newtheoremstyle{example}{\topsep}{\topsep}%
{}
{}
{\bfseries}
{.}
{   }
{\thmname{#1}\thmnumber{ #2}}
\theoremstyle{example}
\newtheorem{theorem}{Theorem}
\newtheorem{lemma}[theorem]{Lemma}
\theoremstyle{definition}
\begin{document}

\title{Uncertainty relations from simple entropic properties}

\author{Patrick J. Coles}
\affiliation{Department of Physics, Carnegie Mellon University, Pittsburgh, Pennsylvania 15213, USA}
\author{Roger Colbeck}
\affiliation{Perimeter Institute for Theoretical Physics, 31 Caroline Street North, Waterloo, ON N2L 2Y5, Canada}
\author{Li Yu}
\affiliation{Department of Physics, Carnegie Mellon University, Pittsburgh, Pennsylvania 15213, USA}
\author{Michael Zwolak}
\affiliation{Department of Physics, Oregon State University, Corvallis, OR 97331, USA}
\date{\today}

\begin{abstract}
  Uncertainty relations provide constraints on how well the outcomes
  of incompatible measurements can be predicted, and, as well as being
  fundamental to our understanding of quantum theory, they have
  practical applications such as for cryptography and witnessing
  entanglement.  Here we shed new light on the entropic form of these
  relations, showing that they follow from a few simple entropic
  properties, including the data processing inequality.  We prove
  these relations without relying on the exact expression for the
  entropy, and hence show that a single technique applies to several
  entropic quantities, including the von Neumann entropy, min- and
  max-entropies and the R\'enyi entropies.
\end{abstract}

\pacs{03.67.-a, 03.67.Hk}

\maketitle

Uncertainty relations form a central part of our understanding of
quantum mechanics, and give a dramatic illustration of the separation
between quantum and classical physics.  They provide fundamental
constraints on how well the outcomes of various incompatible
measurements can be predicted, as first noted by Heisenberg in the
case of position and momentum
measurements~\cite{Heisenberg}.  This and other early
uncertainty relations~\cite{Robertson,Schrodinger} were formulated
using the standard deviation as the measure of uncertainty.

With the advent of information theory, it became natural to develop
relations using entropies to measure
uncertainty~\cite{BiaMyc75,deutsch,kraus,MaaUff88,EURreview1}.
Furthermore, the most recent versions also account for the possibility
of observers holding additional side-information which they can use to
predict the measurement outcomes~\cite{Hall1,RenesBoileau,BertaEtAl},
and the measurements can be arbitrary POVMs (Positive Operator Valued
Measures)~\cite{TomRen2010,ColesEtAl}, which can be thought of as
projective measurements on a possibly enlarged space (see,
e.g.~\cite{NieChu00}).  When formulated in this way, uncertainty
relations can be applied more directly to problems related to
information processing tasks (data compression, transmission over
noisy channels, etc.), or to cryptography, since the quantities
involved (conditional entropies) have direct operational meanings.

Applications of the uncertainty principle go right back to the first
work on quantum cryptography~\cite{Wiesner}, which discussed a
proposal for quantum money, amongst other things.  However, because
they did not account for the possibility of quantum side information,
the uncertainty relations available at the time could not be directly
applied to prove security against arbitrary adversaries, and served
only an intuitional purpose. Following the discovery of
uncertainty relations that account for the possibility of quantum side
information, there have been many direct applications.  They have been
used, for example, as experimentally efficient entanglement
witnesses~\cite{BertaEtAl,LXXLG,PHCFR}, to provide tight finite-key
rates in quantum key distribution~\cite{TLGR} and to prove security of
certain position-based quantum cryptography
protocols~\cite{KMS,BCFGGOS}.

One way to think about uncertainty relations is in the following
tripartite scenario.  Consider a system, $A$, that will be measured
using one of two measurements, $X$ and $Z$, which can be described in
terms of their POVM elements, $\{X_j\}$ and $\{Z_k\}$ (in this work,
we take these sets to be finite).  If $X$ is measured, an observer
(Bob) holding information $B$ is asked to predict the outcome of this
measurement, while if $Z$ is measured, a second observer (Charlie)
holding $C$ is asked to predict the outcome.  In general, the
information $B$ and $C$ held by the observers may be quantum, and,
most generally, the state before measurement is described by a
tripartite density operator, $\rho_{ABC}$.  Uncertainty relations
provide quantitative limits on the prediction accuracy, often giving a
\textit{trade-off} between Bob's ability to predict $X$ and Charlie's
ability to predict $Z$.

There are many different ways to measure uncertainty, and for much of
this paper, we need not specify precisely which measure we are using.
We use $H_K$ to denote a generic measure of uncertainty, which we call
a $K$-entropy.  $H_K(X|B)$ is then a measure of the uncertainty about
the outcome of measurement $X$ given $B$ and, likewise,
$H_{\widehat{K}}(Z|C)$ is a measure of the uncertainty about the outcome
of measurement $Z$ given $C$, where, for our uncertainty relations, we
require the unspecified entropies, $H_K$ and $H_{\widehat{K}}$, to be
closely related as explained later. A tripartite uncertainty relation then gives a lower bound on
$H_K(X|B)+H_{\widehat{K}}(Z|C)$ which depends on the measurements $X$
and $Z$, and reflects their complementarity.  For example, in the case
where $X$ and $Z$ are composed of commuting projectors, so that
there exist states for which both predictions can be correctly made,
this lower bound will be trivial (i.e.~0).

In this work, we show that such uncertainty relations follow from a
few simple entropic properties.  Among them, the data-processing
inequality forms a central part.  Roughly speaking, this states that
if $B$ provides information about $A$, then processing $B$ cannot
decrease the uncertainty about $A$, which is clearly what one would
expect from an uncertainty measure.

We also obtain relations for the bipartite case where only one
measurement will be made (i.e.\ where we only ask Bob to predict the
outcome of the measurement of $X$).  The state-independent relation we
obtain is trivial if $X$ is projective (then there is always a state
for which $H_K(X|B)=0$), but gives an interesting bound for more
general measurements.  Furthermore, we give an additional relation
that depends on the entropy of the initial state.

More precisely, our main result is that for any entropy $H_K$ that
satisfies a particular set of properties (stated below), the relations
\begin{eqnarray}\label{eqn1}
H_K(X|B)+H_{\widehat{K}}(Z|C)\!&\geq&\!\log\frac{1}{c(X,Z)},\\
\label{eqn2}
H_K(X|B)\!&\geq&\!\log\frac{1}{c(X)},\quad\text{ and}\\
H_K(X|B)\!&\geq&\!\log\frac{1}{c'(X)}+H_K(A|B)\label{eqn3}
\end{eqnarray}
hold for any state $\rho_{ABC}$, where
$c(X,Z)=\max_{jk}\|\sqrt{X_j}\sqrt{Z_k}\|^2_{\infty}$,
$c(X)=c(X,\{\id\})$ and $c'(X)=\max_j\Tr(X_j)$ (the infinity norm of
an operator is its largest singular value)~\footnote{While the base of
the logarithm is conventionally taken to be 2, so that entropies are
measured in \emph{bits}, our results apply for any base, provided the
same one is used throughout.}. In~\eqref{eqn3}, $H_K(A|B)$ is the conditional $K$-entropy of $A$ given $B$, and in \eqref{eqn1}, $H_{\widehat{K}}$ is the entropy dual to $H_K$ in the sense that for any pure state $\rho_{ABC}$, $H_K(A|B)+H_{\widehat{K}}(A|C)=0$.

In particular, our proof applies to the von Neumann entropy, the min-
and max-entropies, and a range of R\'enyi entropies.  For the
tripartite relation, the first two cases were already
known~\cite{BertaEtAl,TomRen2010,ColesEtAl}, while the latter is new,
and for the bipartite relations we extend previous work on this
idea~\cite{KrishnaParth, ColesEtAl, Rastegin2008} to allow for other
entropies or quantum side information.  To emphasize, the main
contribution of the present work is that it provides a unified proof
of these relations.\bigskip

\textit{Entropic Properties}.|As mentioned above, we are interested in the uncertainties of POVM outcomes. A POVM, $X$,
can be specified via a set of operators $\{X_j\}$ that satisfy
$X_j\geq 0$, $\sum_j X_j=\id$.  We also define an associated TPCPM
(Trace Preserving Completely Positive Map), $\XC$, from $\HC_A$ to
$\HC_X$ given by
\begin{equation}\label{eqn4}
\XC:\rho_A\mapsto\sum_j\dyad{j}{j}_X\Tr(X_j\rho_A),
\end{equation}
where $\{\ket{j}\}$ form an orthonormal basis in $\HC_X$.  Thus, for a
state $\rho_{AB}$, we can define the conditional $K$-entropy of $X$
given $B$, denoted $H_K(X|B)$, as the conditional $K$-entropy of the
state $(\XC\ot\IC)(\rho_{AB})$.

A (bipartite) conditional entropy is a map from the set of density
operators on a Hilbert space $\HC_{AB}$ to the real numbers.  In turns
out to be convenient to consider a generalized quantity, $D_K(S||T)$,
which maps two positive semi-definite operators to the real numbers.
Such quantities are often called relative entropies.  We consider
relative $K$-entropies that are constructed such that they generalize
the respective conditional $K$-entropies in the sense that, depending
on the entropy, either $H_K(A|B)=-D_K(\rho_{AB}||\id\ot\rho_B)$, or
$H_K(A|B)=\max_{\sigma_B}[-D_K(\rho_{AB}||\id\ot\sigma_B)]$ where $\sg_B$ is any (normalized) density operator on $\HC_B$.

We now introduce the properties of $D_K$ that allow us to prove our
uncertainty relations:
\begin{enumerate}[label=(\alph{enumi})]

\item \label{a} Decrease under TPCPMs: If $\EC$ is a TPCPM, then
$D_K(\EC(S)||\EC(T))\leq D_K(S||T)$.

\item \label{b} Being unaffected by null subspaces: $D_{K}(S \oplus 0 ||
T\oplus T')=D_{K}(S||T)$, where $\oplus$ denotes direct sum.

\item \label{c} Multiplying the second argument: If $c$ is a positive constant, then $D_K(S||cT)=D_K(S||T)+ \log\frac{1}{c}$.

\item \label{d} Zero for identical states: For any density operator $\rho$, $D_K(\rho ||\rho)=0$.

\end{enumerate}

Property~\ref{a} implies the increase of $H_K(A|B)$ under TPCPMs on
$B$, i.e.\ the data processing inequality|doing operations on $B$
cannot decrease the uncertainty about $A$. It also implies that $D_K$
is invariant under isometries $U$, i.e.,
\begin{equation}
\label{eqn4444}
D_K(USU^{\dagger}||UTU^{\dagger})=D_K(S||T).
\end{equation}
This can be seen by invoking~\ref{a} twice in succession, first with
the TPCPM corresponding to $U$, then with a TPCPM that undoes $U$, establishing that $D_K(S||T)\geq
D_K(USU^{\dagger}||UTU^{\dagger})\geq D_K(S||T)$, and
hence~\eqref{eqn4444}.

The uncertainty relation~\eqref{eqn1} is expressed in terms of the
entropy $H_K$ and its dual $H_{\widehat{K}}$, the latter being defined
by $H_{\widehat{K}}(A|B):=-H_K(A|C)$, where $\rho_{ABC}$ is a
purification of $\rho_{AB}$.  That this is independent of the chosen
purification (and hence that $H_{\widehat{K}}$ is well-defined) is
ensured by the invariance of $H_K(A|B)$ under local isometries (shown
in the Appendix, Lemma~\ref{lem:isom}), and the fact that purifications are unique up to isometries on the purifying system
(see, for example,~\cite{NieChu00}).  This definition also ensures
that $H_{\widehat{K}}(A|B)$ inherits many natural properties of
$H_K(A|B)$, for example, increase under TPCPMs on $B$ and invariance
under local isometries.

We proceed by giving some examples of entropies that fit these
criteria.  The first is the von Neumann entropy, which can be defined
via the von Neumann relative entropy.  For two positive operators, $S$
and $T$, this is given by
$$D(S||T):=\lim_{\xi\rightarrow 0}\frac{1}{\Tr S}(\Tr(S\log S)-\Tr(S\log (T+\xi\id))).$$
Note that if $T$ is invertible, the limit is not needed, and if part of $S$
lies outside the support of $T$ then $D(S||T)=\infty$. For a density
operator $\rho_{AB}$, we can then define the conditional von Neumann
entropy of $A$ given $B$ by $H(A|B):=-D(\rho_{AB}||\id\ot\rho_B)$.
The von Neumann entropy is its own dual, i.e.\ for any pure state
$\rho_{ABC}$, we have $H(A|B)=-H(A|C)$.

A second class of entropies to which our results apply are a range of
R\'enyi entropies~\cite{Renyi, Petz84} (for examples of their
application, see e.g.~\cite{MosonyiEtAl2011}).  For positive
operators, $S$ and $T$, and for $\alpha\in(0,1)\cup (1,2]$, the
R\'enyi relative entropy of order $\alpha$ is defined by
$$D_{\alpha}(S||T):=\lim_{\xi\rightarrow 0}\frac{1}{\alpha-1}\log\Tr(S^\alpha (T+\xi\id)^{1-\alpha}).$$
Furthermore, we define
\begin{eqnarray*}
D_0(S||T)&:=&\lim_{\alpha\rightarrow 0+}D_{\alpha}(S||T)\quad \text{ and}\\
D_1(S||T)&:=&\lim_{\alpha\rightarrow 1}D_{\alpha}(S||T)=D(S||T).
\end{eqnarray*}
Hence, the von Neumann relative entropy can be seen as the special
case $\alpha=1$. The relative entropy $D_{\alpha}$ gives rise to the
conditional R\'enyi entropy
$$H_{\al}(A|B):= -D_{\al}(\rho_{AB} || \id\ot\rho_B ),$$
which satisfies the duality relation that $H_{\al}(A|B)=-H_{2-\al}(A|C)$ for pure
$\rho_{ABC}$~\cite{TomColRen09}.

Furthermore, the min and max relative entropies
\begin{align}
&D_{\min}(S || T):=\log\min\{\lm : S\leq \lm T \} \notag \\
&D_{\max}(S || T):= -2\log\Tr\sqrt{\sqrt{S}T\sqrt{S}}\notag
\end{align}
can be used to define the related conditional
entropies~\cite{RennerThesis05,KonRenSch09}
\begin{align}
&H_{\min}(A|B):=\max_{\sg_B} [-D_{\min}(\rho_{AB} || \id\ot \sg_B )] \notag\\
&H_{\max}(A|B):= \max_{\sg_B} [ -D_{\max}(\rho_{AB} || \id\ot \sg_B )] \notag
\end{align}
which satisfy the duality relation
$H_{\min}(A|B)=-H_{\max}(A|C)$~\cite{KonRenSch09}.
We also consider the entropies
\begin{align*}
&\widehat{H}_\alpha(A|B):= \max_{\sg_B} [ -D_\alpha(\rho_{AB} || \id\ot \sg_B )].
\end{align*}
While in general we do not have alternative expressions for the duals
of the latter entropies, it has been shown~\cite{Berta} that
$\widehat{H}_{\min}(A|B)=-\widehat{H}_0(A|C)$ for pure $\rho_{ABC}$,
where $\widehat{H}_{\min}(A|B):=-D_{\min}(\rho_{AB}||\id\ot\rho_B)$.\bigskip

\textit{Main Results}.|Our main result is that the properties discussed above are sufficient to
establish the following uncertainty relations~\footnote{See the Appendix for proofs and elaboration of our results.}.
\begin{theorem}
\label{thm1}
Let $X=\{X_{j}\}$ and $Z=\{Z_{k}\}$ be arbitrary POVMs on $A$, and
$H_K(A|B)$ be such that either
$H_K(A|B)=-D_K(\rho_{AB}||\id\ot\rho_B)$ or
$H_K(A|B)=\max_{\sigma_B}[-D_K(\rho_{AB}||\id\ot\sigma_B)]$, for all
$\rho_{AB}$, where $D_K$ satisfies Properties~\ref{a}--\ref{c}.  It
follows that for all $\rho_{ABC}$
\begin{equation*}
 H_K(X|B)+ H_{\widehat K}(Z|C)\geq \log\frac{1}{c(X,Z)},
\end{equation*}
where $c(X,Z)=\max_{j,k} \| \sqrt{Z_{k}} \sqrt{X_{j}}\|_\infty^2$.
\end{theorem}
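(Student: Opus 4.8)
The plan is to prove the bound by reducing both terms to a single relative entropy $D_K$ and then connecting the two measurement pictures with one data-processing step whose ``cost'' is exactly $c(X,Z)$. First I would note that $\rho_{XB}$ and $\rho_{ZC}$ depend only on $\rho_{AB}$ and $\rho_{AC}$, so purifying $\rho_{ABC}$ changes neither $H_K(X|B)$ nor $H_{\widehat K}(Z|C)$; hence I may assume $\rho_{ABC}=\dya{\psi}$ is pure. Taking the definition $H_K(A|B)=-D_K(\rho_{AB}\|\id\ot\rho_B)$ (the max-variant is handled identically, carrying the optimizations over $\sigma_B$ and $\sigma_R$ through the same steps), I then have $H_K(X|B)=-D_K(\rho_{XB}\|\id_X\ot\rho_B)$.

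Next I would dualize the second term. Implementing $Z$ coherently with the isometry $V_Z=\sum_k\ket{k}_Z\sqrt{Z_k}$ together with a classical copy register $\bar Z$, the vector $\ket{\Phi}=\sum_k\ket{k}_{\bar Z}\ket{k}_Z(\sqrt{Z_k}\ot\id)\ket{\psi}$ is a purification of $\rho_{ZC}$ with purifying system $R=\bar Z AB$. The defining property of the dual then gives $H_{\widehat K}(Z|C)=-H_K(Z|R)=D_K(\sigma_{ZR}\|\id_Z\ot\sigma_R)$, where $\sigma$ is the reduced state of $\dya{\Phi}$. The inequality I must establish becomes $D_K(\sigma_{ZR}\|\id_Z\ot\sigma_R)\geq -H_K(X|B)+\log\frac1{c(X,Z)}$.

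The heart of the argument is a single channel $\EC$ from $ZR$ to $XB$ obtained by coherently undoing the copy and the isometry $V_Z$ and then performing the measurement $\XC$; because $\sigma_{ZR}$ is supported on the matched subspace where $V_Z$ is invertible, $\EC(\sigma_{ZR})=\rho_{XB}$ exactly. Applying the same $\EC$ to the flat operator produces $\XC(\sum_k Z_k\rho_{AB}Z_k)$, and here the complementarity enters through the operator inequality $\sum_k Z_k X_j Z_k\leq c(X,Z)\,\id$, which follows from $\sum_k Z_k=\id$ and $\|\sqrt{Z_k}\sqrt{X_j}\|_\infty^2\leq c(X,Z)$ by a short operator-norm estimate; this shows $\EC(\id_Z\ot\sigma_R)\leq c(X,Z)\,\id_X\ot\rho_B$ on the relevant block. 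The bound then follows by chaining Property~\ref{a} (data processing), Property~\ref{b} (to drop the auxiliary block used to make $\EC$ trace preserving, and to deduce that $D_K$ is nonincreasing in its second argument), and Property~\ref{c} (to extract $\log\frac1{c(X,Z)}$ from the rescaling).

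I expect the main obstacle to be the construction of $\EC$: the post-measurement state lives on the image of an isometry while the flat reference operator does not, so the ``undo'' maps are only partial isometries and must be completed to a genuine TPCPM. The clean fix is to route the orthogonal complement to an extra flag outcome $\ket{\perp}_X$ that $\rho_{XB}$ never populates, so that Property~\ref{b} discards it for free; verifying that this completion leaves the $c(X,Z)$ bound intact is the one place demanding care.
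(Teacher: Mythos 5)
Your proof is correct and follows essentially the same route as the paper's: dualizing $H_{\widehat K}(Z|C)$ to $-H_K(Z|\bar{Z}AB)$ via the coherent implementation of $Z$, pushing through a channel that undoes the copy and $V_Z$ and then applies $\XC$ (the paper merely decomposes your $\EC$ into a pinching onto $V_ZV_Z\ad$, two isometric conjugations, and a partial trace, with the flag block discarded by Property~\ref{b}), and extracting $\log\frac{1}{c(X,Z)}$ from $\sum_k Z_kX_jZ_k\leq c(X,Z)\id$ together with monotonicity in the second argument and Property~\ref{c}. The only imprecision is the opening reduction: a purification of a mixed $\rho_{ABC}$ lives on four systems and cannot in general be replaced by a pure state on $ABC$ with the same marginals $\rho_{AB}$ and $\rho_{AC}$, so the purifying register $D$ must be absorbed into $B$ (or into the purifying system $R$ of your $\ket{\Phi}$) and removed afterwards by one further use of data processing, $H_K(X|B)\geq H_K(X|BD)$, exactly as the paper does.
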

The ideas behind this proof are illustrated below where we give a
proof for the special case where $H_K$ is the von Neumann entropy, and
$X$ and $Z$ are composed of rank-one projectors.

We also have the following single-measurement uncertainty relation.
\begin{lemma}
\label{thm2}
Let $X=\{X_{j}\}$ be an arbitrary POVM on $A$, and suppose that $H_K$
and its related $D_K$ satisfy the conditions given in
Theorem~\ref{thm1}, as well as Property~\ref{d}.  Then, for all
$\rho_{AB}$,
\begin{equation}
\label{eqn6}
 H_K(X|B)\geq \log\frac{1}{c(X)},
\end{equation}
where $c(X):=c(X,\{\id\})=\max_{j}\| X_{j}\|_\infty$.
\end{lemma}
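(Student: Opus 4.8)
The plan is to strip the statement down to a single inequality about the relative entropy $D_K$ and then obtain it from Properties~\ref{a}--\ref{d}. Write $\sigma_{XB}:=(\XC\ot\IC)(\rho_{AB})$ for the post-measurement state and note that its $B$-marginal is unchanged, $\sigma_B=\Tr_X\sigma_{XB}=\rho_B$, since $\XC$ is trace preserving and acts only on $A$. By the definition of $H_K(X|B)$ as the conditional $K$-entropy of $\sigma_{XB}$, it suffices in either allowed form to establish
\begin{equation*}
D_K(\sigma_{XB}||\id_X\ot\rho_B)\leq\log c(X),
\end{equation*}
because this gives $-D_K(\sigma_{XB}||\id_X\ot\rho_B)\geq\log\frac{1}{c(X)}$; in the maximized form one simply takes $\sigma_B=\rho_B$ as a feasible point in the maximization.

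The first substantive step is an operator inequality. Writing $\omega_B^j:=\Tr_A[(X_j\ot\id_B)\rho_{AB}]$, so that $\sigma_{XB}=\sum_j\dya{j}_X\ot\omega_B^j$, I would use $X_j\leq\|X_j\|_\infty\id_A\leq c(X)\id_A$ together with $\rho_{AB}\geq 0$ to deduce $\omega_B^j\leq c(X)\rho_B$ for every $j$: testing against an arbitrary $\ket{\psi}_B$ reduces this to $\Tr[(c(X)\id_A-X_j)\ot\dya{\psi}\,\rho_{AB}]\geq 0$, which holds since the trace of a product of positive semidefinite operators is nonnegative. Summing over $j$ with $\sum_j\dya{j}_X=\id_X$ then yields the key bound
\begin{equation*}
\sigma_{XB}\leq c(X)\,(\id_X\ot\rho_B).
\end{equation*}

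The heart of the argument is converting this operator inequality into a $D_K$ bound using only the abstract properties. Set $\Delta:=c(X)(\id_X\ot\rho_B)-\sigma_{XB}\geq 0$ and consider the block-diagonal operators $\sigma_{XB}\oplus 0$ and $\sigma_{XB}\oplus\Delta$ on a space enlarged by a single flag qubit. Tracing out the flag is a TPCPM $\EC$ with $\EC(\sigma_{XB}\oplus 0)=\sigma_{XB}$ and $\EC(\sigma_{XB}\oplus\Delta)=c(X)(\id_X\ot\rho_B)$, so Property~\ref{a} gives
\begin{equation*}
D_K(\sigma_{XB}||c(X)(\id_X\ot\rho_B))\leq D_K(\sigma_{XB}\oplus 0||\sigma_{XB}\oplus\Delta).
\end{equation*}
The right-hand side collapses to $D_K(\sigma_{XB}||\sigma_{XB})=0$ by Properties~\ref{b} and~\ref{d}, while Property~\ref{c} rewrites the left-hand side as $D_K(\sigma_{XB}||\id_X\ot\rho_B)+\log\frac{1}{c(X)}$. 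Combining these yields $D_K(\sigma_{XB}||\id_X\ot\rho_B)\leq\log c(X)$, as required.

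I expect the main obstacle to be spotting the direct-sum/flag construction of the previous paragraph: the operator inequality $\sigma_{XB}\leq c(X)(\id_X\ot\rho_B)$ is immediate for $\Dmin$ (directly from its definition), but for a \emph{generic} $D_K$ it must be re-expressed purely through data processing, null-subspace invariance, rescaling of the second argument, and vanishing on identical pairs. The remaining verifications are routine: $\sigma_{XB}$ is a genuine density operator (so Property~\ref{d} legitimately applies), and $\sigma_B=\rho_B$ makes $\rho_B$ the natural second argument.
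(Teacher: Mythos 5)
Your proof is correct, but it takes a genuinely different route from the paper's. The paper obtains Lemma~\ref{thm2} as a one-line corollary of Theorem~\ref{thm1}: set $Z=\{\id\}$ and use $H_{\widehat K}(Z|C)=0$ for a trivial measurement (Lemma~\ref{thmpp}\ref{pt1} in the Appendix, which also invokes Property~\ref{d}). You instead give a self-contained direct argument: the operator inequality $(\XC\ot\IC)(\rho_{AB})\leq c(X)\,\id_X\ot\rho_B$, followed by monotonicity of $D_K$ under increase of its second argument, anchored at $D_K(\sigma_{XB}||\sigma_{XB})=0$ via Property~\ref{d}, and finished with Property~\ref{c}. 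Your flag/direct-sum construction is exactly the paper's Lemma~\ref{thm5} (Eq.~\eqref{eqn999}) re-derived inline, so you could simply cite that. What your approach buys: it avoids the entire apparatus of Theorem~\ref{thm1} (the isometries $V_X,V_Z$, purification, and the dual entropy $H_{\widehat K}$ and its well-definedness), it makes the role of Property~\ref{d} transparent, and it isolates the single quantitative input as the operator bound $X_j\leq c(X)\id$. What the paper's approach buys: brevity and an emphasis on the bipartite relation being a degenerate case of the tripartite one. One small stylistic point: your handling of the two admissible forms of $H_K$ is right ($\sigma_B=\rho_B$ is the exact marginal in the first form and a feasible point in the maximization in the second), and it is worth stating, as you do, that $\sigma_{XB}$ is normalized so that Property~\ref{d} legitimately applies.
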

\begin{proof}
  This follows from Theorem~\ref{thm1} by setting $Z=\{\id\}$ and
  using the fact that $H_{\widehat K}(Z|C)=0$ in this case (see
  Lemma~\ref{thmpp} in the Appendix).
\end{proof}

However, there is an alternative single-measurement relation, which
can give a stronger bound than~\eqref{eqn6}.
\begin{lemma}
\label{thm3}
Let $X=\{X_j\}$ be an arbitrary POVM on $A$, and $H_K(A|B)$ be such
that either $H_K(A|B)=-D_K(\rho_{AB}||\id\ot\rho_B)$ or
$H_K(A|B)=\max_{\sg_B}[-D_K(\rho_{AB}||\id\ot\sg_B)]$, for all
$\rho_{AB}$, where $D_K$ satisfies Properties~\ref{a}--\ref{c}.  It
follows that
\begin{equation*}
 H_K(X|B)\geq \log\frac{1}{c'(X)}+H_K(A|B),
\end{equation*}
where $c'(X)=\max_j\Tr(X_j)$.
\end{lemma}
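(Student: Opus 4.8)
The plan is to establish the equivalent relative-entropy inequality and negate it at the end. I treat the two admissible forms of $H_K$ simultaneously: if $H_K(A|B)=-D_K(\rho_{AB}||\id\ot\rho_B)$ I set $\sg_B^\ast:=\rho_B$, whereas if $H_K(A|B)=\max_{\sg_B}[-D_K(\rho_{AB}||\id\ot\sg_B)]$ I let $\sg_B^\ast$ be a maximizer, so that in both cases $H_K(A|B)=-D_K(\rho_{AB}||\id\ot\sg_B^\ast)$. Writing $\sigma_{XB}:=(\XC\ot\IC)(\rho_{AB})$ and noting that the $B$-marginal is untouched by $\XC$, the target reduces to
\begin{equation*}
D_K(\sigma_{XB}||\id_X\ot\sg_B^\ast)\leq D_K(\rho_{AB}||\id\ot\sg_B^\ast)+\log c'(X),
\end{equation*}
because negating this and using $H_K(X|B)\geq -D_K(\sigma_{XB}||\id_X\ot\sg_B^\ast)$ (with equality in the first form, and a valid lower bound in the second since $\sg_B^\ast$ is feasible in the maximization defining $H_K(X|B)$) gives exactly the claim. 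The key auxiliary object is the image of the identity under the measurement channel, $\tau_X:=\XC(\id)=\sum_j\Tr(X_j)\dyad{j}{j}_X$.

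First I would apply the data-processing inequality, Property~\ref{a}, to the TPCPM $\XC\ot\IC$ on the pair $\rho_{AB}$ and $\id\ot\sg_B^\ast$. Since $(\XC\ot\IC)(\id\ot\sg_B^\ast)=\tau_X\ot\sg_B^\ast$, this yields
\begin{equation*}
D_K(\sigma_{XB}||\tau_X\ot\sg_B^\ast)\leq D_K(\rho_{AB}||\id\ot\sg_B^\ast).
\end{equation*}
It then remains only to replace $\tau_X$ by $\id_X$ in the second argument at a cost of at most $\log c'(X)$, using $\tau_X=\sum_j\Tr(X_j)\dyad{j}{j}_X\leq c'(X)\,\id_X$, and hence $\tau_X\ot\sg_B^\ast\leq c'(X)\,\id_X\ot\sg_B^\ast$.

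The step I expect to be the main obstacle is that this replacement requires monotonicity of $D_K$ in its second argument, i.e.\ that $T_1\leq T_2$ implies $D_K(S||T_1)\geq D_K(S||T_2)$, which is not one of the listed properties. The plan is to derive it from Properties~\ref{a} and~\ref{b}. Given $T_1\leq T_2$, put $\Delta:=T_2-T_1\geq 0$ and consider the block operators $S\oplus 0$ and $T_1\oplus\Delta$ on two copies of the space; Property~\ref{b} gives $D_K(S\oplus 0||T_1\oplus\Delta)=D_K(S||T_1)$. Taking the partial trace over the two-dimensional register is a TPCPM that adds the diagonal blocks, sending $S\oplus 0\mapsto S$ and $T_1\oplus\Delta\mapsto T_2$, so Property~\ref{a} gives $D_K(S||T_2)\leq D_K(S||T_1)$, which is the desired monotonicity.

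Finally I would chain everything together. Monotonicity applied to $\tau_X\ot\sg_B^\ast\leq c'(X)\,\id_X\ot\sg_B^\ast$ gives $D_K(\sigma_{XB}||c'(X)\,\id_X\ot\sg_B^\ast)\leq D_K(\sigma_{XB}||\tau_X\ot\sg_B^\ast)$, and Property~\ref{c} with constant $c'(X)$ rewrites the left-hand side as $D_K(\sigma_{XB}||\id_X\ot\sg_B^\ast)+\log\frac{1}{c'(X)}$. Combining with the data-processing bound of the second step produces the reduced inequality, and negating yields $H_K(X|B)\geq\log\frac{1}{c'(X)}+H_K(A|B)$ for both forms of $H_K$.
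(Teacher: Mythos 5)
Your proof is correct and follows essentially the same route as the paper's: apply Property~(a) with the measurement channel $\XC\ot\IC$, bound $\XC(\id)=\sum_j\Tr(X_j)\dyad{j}{j}\leq c'(X)\,\id_X$ using monotonicity of $D_K$ in its second argument, and finish with Property~(c), treating the two forms of $H_K$ exactly as the paper does. The monotonicity step you derive from Properties~(a) and~(b) is precisely the paper's Lemma~\ref{thm5} (Eq.~\eqref{eqn9}), which the paper's proof simply cites; your partial-trace construction is an equivalent variant of the map $\FC$ used there.
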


We remark that the bounds in these results can be generalized in
the following way.  Suppose $\Pi$ is a projector on $\HC_A$ whose
support includes the support of $\rho_A$.  The above results hold if
$c(X,Z)$ is replaced by $c(X,Z;\Pi):=\max_{j,k} \| \sqrt{Z_{k}}\Pi
\sqrt{X_{j}}\|_\infty^2$, and if $c'(X)$ is replaced by
$c'(X;\Pi)=\max_j\Tr(X_j\Pi)$. See~\cite{HanTom2011} for further ways
to take advantage of knowledge of the state to derive tighter
uncertainty relations for the von Neumann entropy.

We have shown that, in order to establish that a particular entropy
satisfies these uncertainty relations, it suffices to verify that it
satisfies a few properties. (Recall that for any entropy satisfying
our properties, its dual is automatically well defined; it is not
necessary to have an alternative expression for it in order
for~\eqref{eqn1} to hold.)
\begin{lemma}
\label{thm4}
All examples of relative entropies defined above satisfy Properties~\ref{a} through~\ref{d}.
\end{lemma}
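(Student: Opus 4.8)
The plan is to check all four properties one relative entropy at a time, across the five families introduced above: the von Neumann divergence $D$, the Pétz--Rényi divergences $D_\alpha$ for $\alpha\in(0,1)\cup(1,2]$, the limiting cases $D_0$ and $D_1=D$, and the two entropies $D_{\min}$ and $D_{\max}$. Properties~\ref{b}, \ref{c}, and~\ref{d} each reduce to a short direct computation, so I would dispatch them quickly; Property~\ref{a} (monotonicity under TPCPMs) is the substantive one and is where the real content lies, so I would treat it last and separately.

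For Property~\ref{d} I would just substitute $S=T=\rho$ with $\Tr\rho=1$: the two trace terms in $D$ cancel, $\frac{1}{\alpha-1}\log\Tr(\rho^\alpha\rho^{1-\alpha})=\frac{1}{\alpha-1}\log\Tr\rho=0$ for $D_\alpha$, the minimal $\lambda$ with $\rho\leq\lambda\rho$ is $1$ for $D_{\min}$, and $\sqrt{\rho}\,\rho\,\sqrt{\rho}=\rho^2$ gives $\Tr\sqrt{\rho^2}=\Tr\rho=1$ for $D_{\max}$. For Property~\ref{c} the clean trick is to absorb the constant into the regulariser via $\xi=c\eta$, so that $cT+\xi\id=c(T+\eta\id)$ and the limit is unaffected; then $\log\bigl(c(T+\eta\id)\bigr)=(\log c)\id+\log(T+\eta\id)$ produces the shift $+\log\frac1c$ for $D$, and $(cT+\xi)^{1-\alpha}=c^{1-\alpha}(T+\eta)^{1-\alpha}$ produces the same shift for $D_\alpha$ after applying $\tfrac{1}{\alpha-1}\log$; for $D_{\min}$ one rescales $\lambda\mapsto\lambda c$ in the constraint, and for $D_{\max}$ one pulls $\sqrt c$ out of the trace and applies $-2\log$. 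For Property~\ref{b} the key observation is that $(S\oplus0)^\alpha=S^\alpha\oplus0$ and $\sqrt{S\oplus0}=\sqrt S\oplus0$, so although $\log$ or the power of $T\oplus T'+\xi\id$ acts block-diagonally, it is always multiplied by the vanishing second block and only the first block survives; for $D_{\min}$, $S\oplus0\leq\lambda(T\oplus T')$ holds iff $S\leq\lambda T$ since $0\leq\lambda T'$ is automatic.

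The main obstacle is Property~\ref{a}, and my strategy is to split it into the cases where a deep theorem must be cited and those that admit an elementary argument. For the von Neumann relative entropy this is precisely Lindblad--Uhlmann monotonicity, and for the Pétz--Rényi divergences it is the classical data-processing result valid on exactly the stated range $\alpha\in(0,1)\cup(1,2]$; I would invoke these rather than reprove them, and then obtain $D_0$ and $D_1$ by taking $\alpha\to0^+$ and $\alpha\to1$ in the (order-preserving) inequality. For $D_{\min}$ there is a direct proof: if $\lambda T-S\geq0$, then positivity and linearity of the TPCPM $\EC$ give $\lambda\,\EC(T)-\EC(S)\geq0$, so every feasible $\lambda$ for $(S,T)$ is feasible for $(\EC(S),\EC(T))$ and the minimum can only decrease. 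Finally $D_{\max}$ is governed by the fidelity $\Tr\sqrt{\sqrt S\,T\,\sqrt S}$, whose monotone \emph{increase} under TPCPMs (Uhlmann) combines with the decreasing map $-2\log$ to yield the required decrease of $D_{\max}$. The only delicate points I anticipate are confirming that the limiting and regularised ($\xi\to0$) definitions preserve the cited inequalities, which follows because each bound is closed under the relevant limits.
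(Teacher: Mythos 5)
Your proposal is correct and follows essentially the same route as the paper: Properties (b), (c) and (d) are dispatched by direct computation from the definitions, while Property (a) is obtained by invoking the known monotonicity results (Lindblad--Uhlmann for the von Neumann case, Petz's data-processing theorem for $D_\alpha$ on $(0,1)\cup(1,2]$ with $D_0$, $D_1$ as limits, and monotonicity of the fidelity composed with $-2\log$ for $D_{\max}$). The only difference is that you supply the short direct argument for $D_{\min}$ (positivity of the channel preserves feasibility of $\lambda$) where the paper simply cites the literature, and you spell out the regularizer bookkeeping that the paper leaves implicit.
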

\begin{proof}
Properties~\ref{b},~\ref{c}, and~\ref{d} follow directly from the definitions of these entropies. Property~\ref{a} was discussed in, e.g.,~\cite{NieChu00} for the von Neumann relative entropy, in~\cite{Petz84, TomColRen09} for the R\'enyi relative entropies ($D_0$ being a special case), and in~\cite{RennerThesis05} for the min relative entropy.  For the max relative entropy, it follows because the fidelity is monotonically increasing under TPCPMs~\cite{BCFJS}.
\end{proof}

This implies that the dual entropy pairs $(H,H)$,
$(H_{\al},H_{2-\al})$, $(H_{\min}, H_{\max})$ and
$(\widehat{H}_{\min}, \widehat{H}_0)$ each satisfy Eq.~\eqref{eqn1},
and that the entropies $H$, $H_{\al}$, $H_{\min}$, $H_{\max}$,
$\widehat{H}_{\al}$ and $\widehat{H}_{\min}$ each satisfy
Eqs.~\eqref{eqn2} and~\eqref{eqn3}.\bigskip

\textit{Illustration of the proof technique}.|In order to illustrate how our properties combine to yield uncertainty
relations, we give a proof in the special case of the von Neumann
entropy and where $X=\{\dya{X_j}\}$ and $Z=\{\dya{Z_k}\}$ are
orthonormal bases.  Although more straightforward, this proof features
all of the essential ideas of its generalization.  We note that in
this case $c(X,Z)=\max_{j,k} |\ip{X_{j}}{Z_{k}}|^2$, and the resulting
uncertainty relation,
\begin{equation}
\label{eqn8}
 H(X|B)+ H(Z|C)\geq \log\frac{1}{c(X,Z)},
\end{equation}
is the one conjectured in~\cite{RenesBoileau} and proven
in~\cite{BertaEtAl}.

We first show that all relative $K$-entropies are decreasing under
increases of its second argument.
\begin{lemma}
\label{thm5}
If $D_K(S||T)$ satisfies Properties~\ref{a} and~\ref{b}, then for all
positive operators $S$ and $T$, and for $\tilde{T}\geq T$,
\begin{equation}
\label{eqn999}
D_{K}(S||T)\geq D_{K}(S||\tilde{T}).
\end{equation}
\end{lemma}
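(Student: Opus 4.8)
The plan is to exploit the freedom in Property~\ref{b} to pad the first argument with a null block, and then to collapse that padding with a trace-preserving completely positive map so as to turn $T$ into $\tilde T$ while leaving $S$ untouched. Concretely, since $\tilde T\geq T$, the difference $\Delta:=\tilde T-T$ is a positive operator. Property~\ref{b} then gives $D_K(S||T)=D_K(S\oplus 0\,||\,T\oplus\Delta)$, where the second summand $T'$ in Property~\ref{b} has been chosen to be exactly $\Delta$, so that it absorbs the gap between $T$ and $\tilde T$.

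Next I would introduce the TPCPM $\EC$ that acts on block operators on $\HC_A\oplus\HC_A$ by summing the two diagonal blocks, i.e.\ $\EC\colon \left(\begin{smallmatrix} M_{00} & M_{01}\\ M_{10} & M_{11}\end{smallmatrix}\right)\mapsto M_{00}+M_{11}$. This is nothing but the partial trace over the two-dimensional flag register that labels the direct summands, and hence is manifestly completely positive and trace preserving. Its effect on our two operators is precisely what is required: $\EC(S\oplus 0)=S$ and $\EC(T\oplus\Delta)=T+\Delta=\tilde T$.

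Applying Property~\ref{a} to $\EC$ then yields
\begin{equation*}
D_K(S\oplus 0\,||\,T\oplus\Delta)\geq D_K\bigl(\EC(S\oplus 0)\,||\,\EC(T\oplus\Delta)\bigr)=D_K(S||\tilde T),
\end{equation*}
and combining this with the identity from the first paragraph gives $D_K(S||T)\geq D_K(S||\tilde T)$, as claimed.

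The construction itself is short, so the only points that require care are checking that $\EC$ is a legitimate TPCPM and that the boundary conditions $\EC(S\oplus 0)=S$ and $\EC(T\oplus\Delta)=\tilde T$ hold exactly; both are immediate once one recognizes the block-summing map as a partial trace over the flag. The essential conceptual move---and the step I expect to be the crux---is realizing that monotonicity of $D_K$ in its second argument is not an independent axiom but a consequence of combining data processing (Property~\ref{a}) with null-subspace invariance (Property~\ref{b}): the positive increment $\Delta$ is first hidden in an orthogonal block and then merged back in by a channel, so that $\tilde T$ can be reached from $T$ entirely through operations under which $D_K$ can only decrease.
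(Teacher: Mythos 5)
Your proof is correct and follows essentially the same route as the paper: pad with $\Delta=\tilde T-T$ via Property (b), then apply Property (a) with the block-summing channel (the paper's map $\FC$ with Kraus operators $F_1=\sum_j\dyad{\mu_j}{\mu_j}$, $F_2=\sum_j\dyad{\mu_j}{\nu_j}$ is exactly your diagonal-block sum, up to a final trivial use of (b) to strip the leftover zero block). No substantive differences.
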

\begin{proof}  
  Denote $\HC_{\mu}$ as the Hilbert space on which $S$, $T$ and
  $\tilde{T}$ are defined and introduce $\HC_{\nu}$ as an isomorphic
  Hilbert space.  Let $\{\ket{\mu_j}\}$ and $\{\ket{\nu_j}\}$ be
  orthonormal bases for $\HC_{\mu}$ and $\HC_{\nu}$ and let
  $\HC=\HC_{\mu}\oplus\HC_{\nu}$.  We also introduce a TPCPM acting on operators on $\HC$,
  $\FC:S\mapsto F_1 S F_1\ad+F_2 S F_2\ad$, with
  $F_1=\sum_j\dyad{\mu_j}{\mu_j}$ and $F_2=\sum_j\dyad{\mu_j}{\nu_j}$.
  For $W:=\tilde{T}-T$, we have
\begin{eqnarray*}
D_K(S||T)&\eqprop{\ref{b}}{=}&D_K(S\oplus 0 || T\oplus W)\\
&\eqprop{\ref{a}}{\geq}&D_K(\FC(S\oplus 0) ||
  \FC(T\oplus W))\\
&\eqprop{\ref{b}}{=}& D_K(S \oplus 0 || (T+ W)\oplus 0)= D_K(S ||
  \tilde{T}).
\end{eqnarray*}
\end{proof}

Now, define the isometry $V_X:=\sum_j \ket{j}\ot X_j$ associated with
the $X$ measurement on system $A$, and the state
$\tilde{\rho}_{XABC}:= V_X \rho_{ABC}V_X\ad$.  We proceed to give the
proof for the case of pure $\rho_{ABC}$. The impure case follows by
considering a purification, $\rho_{ABCD}$, and using $H_{}(X|C)\geq
H_{}(X|CD)$ (from Property~\ref{a}). Applying the duality to
$\tilde{\rho}_{XABC}$ gives:
\begin{eqnarray}
H(X|C)&=&-H(X|AB)=D(\tilde{\rho}_{XAB}|| \id \ot \tilde{\rho}_{AB})\nonumber\\
  &\eqprop{\ref{b}}{=}& D(V_X\rho_{AB} V_X\ad || V_X\sum_j X_j\rho_{AB} X_j V_X\ad)\nonumber\\
    &\eqprop{\eqref{eqn4444}}{=}&D(\rho_{AB}||\sum_j X_j\rho_{AB} X_j)\nonumber\\
    &\eqprop{\ref{a}}{\geq}& D(\overline{\rho}_{ZB} || \sum_{j,k} |\ip{X_{j}}{Z_{k}}|^2Z_k\ot \Tr_A\{X_j\rho_{AB}\})\nonumber\\
    &\eqprop{\eqref{eqn999}}{\geq}& D(\overline{\rho}_{ZB} || c(X,Z)\id \ot \rho_{B})\nonumber\\
    &\eqprop{\ref{c}}{=}& \log (1/c(X,Z))+D(\overline{\rho}_{ZB} || \id \ot \rho_{B})\nonumber\\
    &=& \log (1/c(X,Z)) - H_{}(Z|B),\label{eq:hxc}
\end{eqnarray}
where we have used $\overline{\rho}_{ZB}:=\sum_k Z_k\rho_{AB}Z_k$.

We note that our proof technique points to a method for finding states
that satisfy the uncertainty relation~\eqref{eqn8} with equality.  In
the case of pure states $\rho_{ABC}$ and mutually unbiased bases $X$
and $Z$ (for which $|\ip{X_j}{Z_k}|$ is independent of $j,k$), the
only inequality remaining is a single use of Property~\ref{a} (the
fourth line of~\eqref{eq:hxc}). In this case,~\eqref{eqn8} is
satisfied with equality if Property~\ref{a} is saturated, for the
particular TPCPM used in the proof.

For the von Neumann relative entropy,~\ref{a} is satisfied with
equality~\cite{Petz2003,HaydenEtAl04} if and only if there exists a
TPCPM, $\hat \EC$, that undoes the action of $\EC$ on $S$ and $T$,
i.e.\
\begin{equation}
\label{eqn12}
(\hat\EC\circ\EC)(S)=S,\quad (\hat\EC\circ\EC)(T) = T.
\end{equation} 
Hence, states of minimum uncertainty are closely connected to the
\emph{reversibility} of certain quantum operations.  For specific
examples, we refer the reader to~\cite{ColesYuZwo2011}.\bigskip

\textit{Acknowledgements}.|We thank Robert Griffiths for helpful conversations. Research at Carnegie Mellon was supported by the Office of Naval Research and by the National Science Foundation through Grant No.\ PHY-1068331. Research at Perimeter Institute was supported by the Government of Canada through Industry Canada and by the Province of Ontario through the Ministry of Research and Innovation.

\bibliography{EntUncRel}

\section*{APPENDIX}

\subsection{Properties from the main article}

Recall, from the main article, that the following properties of the relative entropy $D_K$ allow us to prove our
uncertainty relations:\\

(a) Decrease under TPCPMs: If $\EC$ is a TPCPM, then
$D_K(\EC(S)||\EC(T))\leq D_K(S||T)$.\\

(b) Being unaffected by null subspaces: $D_{K}(S \oplus 0 ||
T\oplus T')=D_{K}(S||T)$.\\  

(c) Multiplying the second argument: If $c$ is a positive constant, then $D_K(S||cT)=D_K(S||T)+ \log\frac{1}{c}$.\\

(d) Zero for identical states: For any density operator~$\rho$, $D_K(\rho ||\rho)=0$.\\

\ca
\begin{enumerate}[label=(\alph{enumi})]

\item \label{a} Decrease under TPCPMs: If $\EC$ is a TPCPM, then
$D_K(\EC(S)||\EC(T))\leq D_K(S||T)$.

\item \label{b} Being unaffected by null subspaces: $D_{K}(S \oplus 0 ||
T\oplus T')=D_{K}(S||T)$.  

\item \label{c} Multiplying the second argument: If $c$ is a positive constant, then $D_K(S||cT)=D_K(S||T)+ \log\frac{1}{c}$.

\item \label{d} Zero for identical states: For any density operator $\rho$, $D_K(\rho ||\rho)=0$.
\end{enumerate}
\cb

Also, recall the following consequences of these properties. Property~\ref{a} implies that $D_K$ is invariant under isometries $U$, i.e.,
\begin{equation}
\label{eqn4956}
D_K(USU^{\dagger}||UTU^{\dagger})=D_K(S||T),
\end{equation}
and Properties~\ref{a} and \ref{b} together imply that for all positive operators $S$ and $T$, and for $\tilde{T}\geq T$,
\begin{equation}
\label{eqn9}
D_{K}(S||T)\geq D_{K}(S||\tilde{T}).
\end{equation}

\subsection{Additional properties}

Here we show some additional useful relations that follow from our
properties.  
\begin{lemma}\label{thm7}
If $D_K$ satisfies Properties~\ref{a} and~\ref{d}, then for density
  operators $\rho$ and $\sigma$,
\begin{equation}\label{eqn13}
D_K(\rho||\sigma)\geq 0,
\end{equation}
with equality if $\rho=\sigma$.
\end{lemma}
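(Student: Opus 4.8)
The plan is to collapse both arguments of $D_K$ onto a common state by means of a single TPCPM, after which Properties~\ref{a} and~\ref{d} close the argument immediately. The key observation is that $\rho$ and $\sigma$ are normalized, so $\Tr(\rho)=\Tr(\sigma)=1$; hence any channel that depends on its input only through its trace must send both operators to the same output, and on that common output Property~\ref{d} forces the relative entropy to vanish.

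Concretely, I would fix an arbitrary density operator $\omega$ and introduce the constant (``replacer'') channel $\EC\colon S\mapsto \Tr(S)\,\omega$. The first step is to confirm that $\EC$ is a genuine TPCPM: writing $\omega=\sum_i p_i\dya{e_i}$ in its eigenbasis and choosing any orthonormal basis $\{\ket{f_j}\}$ of the input space, the operators $K_{ij}=\sqrt{p_i}\,\dyad{e_i}{f_j}$ give a Kraus form with $\sum_{ij}K_{ij}\ad K_{ij}=\id$ and $\sum_{ij}K_{ij}SK_{ij}\ad=\Tr(S)\,\omega$, so $\EC$ is completely positive and trace preserving. Since $\rho$ and $\sigma$ are density operators, $\EC(\rho)=\EC(\sigma)=\omega$, and applying Property~\ref{a} followed by Property~\ref{d} yields
\[
D_K(\rho||\sigma)\;\geq\; D_K(\EC(\rho)||\EC(\sigma))\;=\;D_K(\omega||\omega)\;=\;0.
\]
The equality statement requires no additional work: taking $\rho=\sigma$, Property~\ref{d} gives $D_K(\rho||\rho)=0$ directly.

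I expect no genuine obstacle. The only step warranting any care is the verification that the constant channel is completely positive and trace preserving, which is entirely standard (an equally valid alternative is the trace channel $S\mapsto\Tr(S)$ into a one-dimensional space, for which $\EC(\rho)=\EC(\sigma)=1$ and Property~\ref{d} gives $D_K(1||1)=0$). It is worth emphasizing that the argument invokes only Properties~\ref{a} and~\ref{d} and never an explicit functional form for $D_K$, so it holds uniformly across all of the relative entropies considered in this work.
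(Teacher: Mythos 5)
Your proposal is correct and follows essentially the same route as the paper: both apply Property~\ref{a} with a trace-and-replace (constant) channel and then conclude with Property~\ref{d}, the only cosmetic difference being that the paper replaces with $\rho$ itself while you replace with an arbitrary fixed $\omega$. The extra verification of the Kraus form is sound but not needed beyond standard facts.
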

\begin{proof}
  Using Property~\ref{a}, we have $D_K(\rho||\sigma)\geq
  D_K(\rho||\rho)$, where we consider the TPCPM corresponding to
  tracing the original system and replacing it with $\rho$.  The
  result then follows from~\ref{d}.
\end{proof}
We note that \eqref{eqn13} applies only to the case where $\rho$ and $\sg$ are normalized (i.e., unit trace), and the relative entropy $D_K(S||T)$ can be either positive or negative for unnormalized positive operators $S$ and $T$ (and, in particular, in the case $T=\id \ot \sg_B$). 

The next lemma implies that for conditional entropies of the form
$H_K(A|B)=\max_{\sigma_B}[-D_K(\rho_{AB}||\id\ot\sigma_B)]$, one can
restrict the maximization over $\sigma_B$ to states in the support of
$\rho_B$.

\begin{lemma}\label{lem:supp}
For all $\rho_{AB}$, there exists a state $\tilde{\sigma}_B$ that
  lies in the support of $\rho_B$ such that
  $\max_{\sigma_B}[-D_K(\rho_{AB}||\id\ot\sigma_B)]=-D_K(\rho_{AB}||\id\ot\tilde{\sigma}_B)$,
  where $D_K$ satisfies Properties~\ref{a} and~\ref{b}.
\end{lemma}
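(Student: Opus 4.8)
The plan is to show that any feasible $\sigma_B$ can be replaced, without decreasing the quantity $-D_K(\rho_{AB}\|\id\ot\sigma_B)$, by a normalized density operator supported on $\supp(\rho_B)$; the existence of the claimed $\tilde{\sigma}_B$ then follows by taking a maximizer over this smaller (compact) set. Let $\Pi$ denote the projector onto $\supp(\rho_B)$ and $\Pi^\perp:=\id-\Pi$. First I would record the key structural fact: since $\rho_B=\Tr_A\rho_{AB}$ is supported on $\supp(\rho_B)$, for any $\ket{e}$ orthogonal to $\supp(\rho_B)$ the positive operator $\bra{e}_B\rho_{AB}\ket{e}_B$ on $\HC_A$ has trace $\bra{e}\rho_B\ket{e}=0$ and hence vanishes. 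This forces $(\id\ot\Pi^\perp)\rho_{AB}=0$, i.e.\ $\rho_{AB}$ is supported on $\HC_A\ot\supp(\rho_B)$.

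Next I would fix any density operator $\omega$ supported on $\supp(\rho_B)$ (for concreteness $\omega=\rho_B$) and introduce the ``measure-and-replace'' channel $\NC_B(\tau):=\Pi\tau\Pi+\Tr(\Pi^\perp\tau)\,\omega$, which is trace-preserving and completely positive. Because $\rho_{AB}$ has no weight on $\HC_A\ot\supp(\rho_B)^\perp$, one checks that $(\id\ot\NC_B)(\rho_{AB})=\rho_{AB}$, whereas $(\id\ot\NC_B)(\id\ot\sigma_B)=\id\ot\tilde{\sigma}_B$ with $\tilde{\sigma}_B:=\NC_B(\sigma_B)=\Pi\sigma_B\Pi+\Tr(\Pi^\perp\sigma_B)\,\omega$ a genuine normalized density operator supported on $\supp(\rho_B)$. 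Applying Property~\ref{a} to the channel $\id\ot\NC_B$ then yields $D_K(\rho_{AB}\|\id\ot\sigma_B)\ge D_K(\rho_{AB}\|\id\ot\tilde{\sigma}_B)$, equivalently $-D_K(\rho_{AB}\|\id\ot\tilde{\sigma}_B)\ge -D_K(\rho_{AB}\|\id\ot\sigma_B)$.

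Consequently the supremum of $-D_K(\rho_{AB}\|\id\ot\sigma_B)$ over all $\sigma_B$ coincides with the supremum restricted to states supported on $\supp(\rho_B)$. To finish I would note that the density operators on $\supp(\rho_B)$ form a compact set on which the objective attains its maximum, so a maximizer $\tilde{\sigma}_B$ lying in $\supp(\rho_B)$ exists, which is exactly the asserted statement.

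I expect the main obstacle to be the normalization bookkeeping. The naive route|pinch $\sigma_B$ into its $\supp(\rho_B)\oplus\supp(\rho_B)^\perp$ blocks using Property~\ref{a} and then discard the orthogonal block via Property~\ref{b}|leaves the sub-normalized operator $\Pi\sigma_B\Pi$, and restoring unit trace in that approach would force an appeal to Property~\ref{c}, which is not among the hypotheses. The replacement channel $\NC_B$ circumvents this precisely by reallocating the lost trace $\Tr(\Pi^\perp\sigma_B)$ back into the support through $\omega$, so that Property~\ref{a} alone produces a bona fide density operator; Property~\ref{b} then serves, if desired, as the clean way to see that the $\supp(\rho_B)^\perp$ block of $\sigma_B$ never contributed to $D_K$ in the first place.
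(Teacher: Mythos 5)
Your proof is correct, and it takes a genuinely different route from the paper's. The paper pinches $\sigma_B$ into its $\supp(\rho_B)\oplus\supp(\rho_B)^{\perp}$ blocks using Property~\ref{a}, discards the orthogonal block using Property~\ref{b}, and is then left with exactly the sub-normalized operator $\Pi\sigma_B\Pi$ that you identify as the obstacle. However, it does \emph{not} renormalize via Property~\ref{c}, as you predicted the naive route would require; instead it invokes the monotonicity~\eqref{eqn9} (itself derived from~\ref{a} and~\ref{b}): since $\hat{\sigma}_B:=\Pi\sigma_B\Pi/\Tr(\Pi\sigma_B)\geq\Pi\sigma_B\Pi$, enlarging the second argument can only decrease $D_K$. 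So your diagnosis of the alternative is slightly off, but your instinct to avoid the sub-normalized intermediate is sound, and your measure-and-replace channel $\NC_B$ achieves the whole reduction in a single application of Property~\ref{a}: it fixes $\rho_{AB}$ (your verification that $\rho_{AB}$ lives in $\HC_A\ot\supp(\rho_B)$ is correct and is the same structural fact the paper uses implicitly) and maps $\id\ot\sigma_B$ directly to $\id\ot\tilde{\sigma}_B$ with $\tilde{\sigma}_B$ already normalized. This buys economy (Property~\ref{b} and~\eqref{eqn9} are not needed) and gracefully covers the degenerate case $\Tr(\Pi\sigma_B)=0$, where the paper's $\hat{\sigma}_B$ is undefined (though such $\sigma_B$ are never optimal). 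One caveat shared with the paper: both arguments treat the maximum as attained; your appeal to compactness would additionally need continuity of $\sigma_B\mapsto D_K(\rho_{AB}||\id\ot\sigma_B)$, which is not among the stated properties, but this is no worse than the paper's own use of $\max$.
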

\begin{proof}
  Let $\Pi$ be the projector onto the support of $\rho_B$, and
  consider the TPCPM
  $\EC:\sg\mapsto\Pi\sg\Pi\oplus(\id-\Pi)\sg(\id-\Pi)$.  For an
  arbitrary density operator $\bar{\sigma}_B$, defining
  $\hat{\sigma}_B:=\frac{1}{\Tr(\Pi\bar{\sigma}_B)}\Pi\bar{\sigma}_B\Pi$,
  we have
\begin{eqnarray*}
  D_K(\rho_{AB}||\id\ot\bar{\sigma}_B)&\eqprop{\ref{a}}{\geq}&D_K(\rho_{AB}||\id\ot\EC(\bar{\sigma}_B))\\
&\eqprop{\ref{b}}{=}&D_K(\rho_{AB}||\id\ot\Pi\bar{\sigma}_B\Pi)\\
&\eqprop{\eqref{eqn9}}{\geq}&D_K(\rho_{AB}||\id\ot\hat{\sigma}_B).
\end{eqnarray*}
In other words, for any $\bar{\sigma}_B$, the density operator
$\hat{\sigma}_B$ is at least as
good for achieving the maximization over $\sigma_B$, which implies the
stated result.
\end{proof}

The next lemma shows that the conditional entropy is invariant under
local isometries (this property is needed to ensure that the dual
entropy is well-defined; see the main text).

\begin{lemma}\label{lem:isom}
 Let $H_K(A|B)$ be such that either
  $H_K(A|B)=-D_K(\rho_{AB}||\id\ot\rho_B)$ or
  $H_K(A|B)=\max_{\sigma_B}[-D_K(\rho_{AB}||\id\ot\sigma_B)]$, for all
  $\rho_{AB}$, where $D_K$ satisfies Properties~\ref{a} and~\ref{b}.
  For arbitrary (local) isometries $V_A:\HC_A\mapsto\HC_{A'}$ and
  $V_B:\HC_B\mapsto\HC_{B'}$, an arbitrary state $\rho_{AB}$ and
  $\rho_{A'B'}:=(V_A\ot V_B)\rho_{AB}( V_A\ad \ot V_B\ad)$, it follows
  that $H_K(A|B)=H_K(A'|B')$.
\end{lemma}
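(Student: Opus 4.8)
The plan is to reduce both admissible definitions of $H_K$ to a single relative-entropy identity and to treat the maximization as a separate, essentially combinatorial, step. First I would record the elementary transformation of the reduced state: since $V_A\ad V_A=\id_A$, tracing out $A'$ of $(V_A\ot V_B)\rho_{AB}(V_A\ot V_B)\ad$ reduces to tracing out $A$, giving $\rho_{B'}=V_B\rho_B V_B\ad$. The core of the argument is the claim that for every density operator $\sigma_B$ on $\HC_B$, setting $\sigma_{B'}:=V_B\sigma_B V_B\ad$, one has
\begin{equation*}
D_K(\rho_{AB}\|\id_A\ot\sigma_B)=D_K(\rho_{A'B'}\|\id_{A'}\ot\sigma_{B'}).
\end{equation*}
Granting this, the definition $H_K(A|B)=-D_K(\rho_{AB}\|\id_A\ot\rho_B)$ is handled at once by taking $\sigma_B=\rho_B$, since then $\sigma_{B'}=\rho_{B'}$.

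To establish the displayed identity I would apply the isometry $U:=V_A\ot V_B:\HC_A\ot\HC_B\to\HC_{A'}\ot\HC_{B'}$. Isometry invariance~\eqref{eqn4956} gives $D_K(\rho_{AB}\|\id_A\ot\sigma_B)=D_K(U\rho_{AB}U\ad\,\|\,U(\id_A\ot\sigma_B)U\ad)$, and a direct computation yields $U\rho_{AB}U\ad=\rho_{A'B'}$ and $U(\id_A\ot\sigma_B)U\ad=\Pi_{A'}\ot\sigma_{B'}$, where $\Pi_{A'}:=V_A V_A\ad$ projects onto the range of $V_A$. It remains only to upgrade $\Pi_{A'}$ to $\id_{A'}$ in the second argument. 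Decomposing $\HC_{A'}=\Pi_{A'}\HC_{A'}\oplus(\id_{A'}-\Pi_{A'})\HC_{A'}$, the state $\rho_{A'B'}$ (being a conjugation by $V_A$) is supported entirely in the first block, so $\rho_{A'B'}=\rho_{A'B'}\oplus 0$, while $\id_{A'}\ot\sigma_{B'}$ splits block-diagonally as $(\Pi_{A'}\ot\sigma_{B'})\oplus((\id_{A'}-\Pi_{A'})\ot\sigma_{B'})$. Property~\ref{b} then discards the null block and gives $D_K(\rho_{A'B'}\|\Pi_{A'}\ot\sigma_{B'})=D_K(\rho_{A'B'}\|\id_{A'}\ot\sigma_{B'})$, which combined with the isometry step proves the claim.

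For the second definition, $H_K(A|B)=\max_{\sigma_B}[-D_K(\rho_{AB}\|\id_A\ot\sigma_B)]$, I would invoke Lemma~\ref{lem:supp} on both sides to restrict the optimizations to states supported on $\supp\rho_B$ and $\supp\rho_{B'}$ respectively. Because $\rho_{B'}=V_B\rho_B V_B\ad$, the support $\supp\rho_{B'}$ lies in the range of $V_B$, and $\sigma_B\mapsto V_B\sigma_B V_B\ad$ is a bijection between density operators supported on $\supp\rho_B$ and those supported on $\supp\rho_{B'}$ (with inverse $\sigma_{B'}\mapsto V_B\ad\sigma_{B'}V_B$, which is normalized since $V_B\ad V_B=\id_B$ and $\sigma_{B'}$ lives in the range of $V_B$). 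The identity above shows this bijection preserves the optimized quantity, so the two maxima agree and $H_K(A|B)=H_K(A'|B')$.

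The substantive points are entirely structural rather than computational: verifying that $\rho_{A'B'}$ has no weight on the orthogonal complement of the range of $V_A$ so that Property~\ref{b} applies cleanly, and using Lemma~\ref{lem:supp} to rule out an optimal $\sigma_{B'}$ with support outside the range of $V_B$, which is precisely what allows every competitor to be pulled back to a genuine state on $\HC_B$. I expect the second of these to be the only real obstacle, since without the support restriction the pullback $V_B\ad\sigma_{B'}V_B$ of an arbitrary $\sigma_{B'}$ need not be normalized; Lemma~\ref{lem:supp} removes this difficulty. Notably, none of the steps uses the explicit form of $D_K$: the whole proof rests only on isometry invariance~\eqref{eqn4956}, Property~\ref{b}, and Lemma~\ref{lem:supp}.
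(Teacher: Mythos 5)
Your proposal is correct and follows essentially the same route as the paper's proof: isometry invariance~\eqref{eqn4956} to transport the state, Property~\ref{b} to replace $V_AV_A\ad$ by $\id_{A'}$, and Lemma~\ref{lem:supp} to restrict the optimization in the second case to states in the support of $\rho_{B'}$ so that every competitor pulls back through $V_B$. You merely spell out more explicitly the bijection argument that the paper compresses into the remark ``where to get to the last line we used Lemma~\ref{lem:supp}.''
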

\begin{proof}
  In the case $H_K(A|B)=-D_K(\rho_{AB}||\id\ot\rho_B)$, we have
\begin{align}
H_K(A|B)& \eqprop{\eqref{eqn4956}}{=}-D_K(\rho_{A'B'} || V_A V_A\ad \ot \rho_{B'})\notag\\
& \eqprop{\ref{b}}{=}-D_K(\rho_{A'B'} || I_{A'} \ot \rho_{B'})=H_K(A'|B'),\notag
\end{align}
as required.  Now consider $H_K(A|B)=
\max_{\sigma_B}[-D_K(\rho_{AB}||\id\ot\sigma_B)]$.  In this case, we
have
\begin{align}
-H_K(A|B)& \eqprop{\eqref{eqn4956}}{=} \min_{\sigma_B} D_K(\rho_{A'B'} || V_A V_A\ad \ot V_B\sigma_B V_B\ad)\notag\\
& \eqprop{\ref{b}}{=} \min_{\sigma_B} D_K(\rho_{A'B'} || I_{A'} \ot V_B\sigma_B V_B\ad)\notag\\
& = \min_{\sigma_{B'}} D_K(\rho_{A'B'} || I_{A'} \ot \sigma_{B'}) =-H_K(A'|B'),\notag
\end{align}
where to get to the last line we used Lemma~\ref{lem:supp}.
\end{proof}

\begin{lemma}\label{thmpp}
  Let $H_K(A|B)$ be such that either
  $H_K(A|B)=-D_K(\rho_{AB}||\id\ot\rho_B)$ or
  $H_K(A|B)=\max_{\sigma_B}[-D_K(\rho_{AB}||\id\ot\sigma_B)]$, for all
  $\rho_{AB}$, where $D_K$ satisfies Properties~\ref{a}--\ref{d}.
  Then
\begin{enumerate}[label=(\roman{*})]
\item \label{pt1} $H_K(A|B)=0$ for any state of the form
  $\rho_{AB}=\dya{\psi}_A\ot\rho_B$.
\item \label{pt2} $H_K(A|B)\leq\log d_A$, where $d_A$ is the dimension of $\HC_A$.
\end{enumerate}
\end{lemma}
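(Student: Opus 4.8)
The plan is to prove each part separately for the two admissible definitions of $H_K(A|B)$, in every case reducing to the nonnegativity of $D_K$ on genuine density operators established in Lemma~\ref{thm7}. The one subtlety to keep in mind throughout is that the second argument $\id\ot\sigma_B$ is \emph{not} normalized (it has trace $d_A$), so Lemma~\ref{thm7} cannot be applied to it directly; the whole point of the argument is to massage $D_K(\rho_{AB}||\id\ot\sigma_B)$ into a comparison of two unit-trace operators using Properties~\ref{b} and~\ref{c}.

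For part~\ref{pt1}, I would decompose $\HC_A=\HC_1\oplus\HC_2$ with $\HC_1=\mathrm{span}\{\ket{\psi}\}$. In this decomposition $\dya{\psi}_A\ot\rho_B$ equals $\rho_B\oplus 0$, while $\id_A\ot\sigma_B$ equals $\sigma_B\oplus(\id_{\HC_2}\ot\sigma_B)$ (identifying $\HC_1\ot\HC_B\cong\HC_B$). Property~\ref{b} then strips off the null block, giving $D_K(\dya{\psi}_A\ot\rho_B||\id\ot\sigma_B)=D_K(\rho_B||\sigma_B)$. For the definition $H_K(A|B)=-D_K(\rho_{AB}||\id\ot\rho_B)$ the marginal forces $\sigma_B=\rho_B$, and Property~\ref{d} gives $0$; for the maximized definition one obtains $H_K(A|B)=-\min_{\sigma_B}D_K(\rho_B||\sigma_B)$, which is $0$ by Lemma~\ref{thm7} (nonnegativity, with equality at $\sigma_B=\rho_B$). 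Either way $H_K(A|B)=0$.

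For part~\ref{pt2}, the idea is to extract the dimension factor from the second argument. Writing $\id_A\ot\sigma_B=d_A\bigl(\tfrac{1}{d_A}\id_A\ot\sigma_B\bigr)$ and applying Property~\ref{c} with $c=d_A$ gives $D_K(\rho_{AB}||\id\ot\sigma_B)=D_K(\rho_{AB}||\tfrac{1}{d_A}\id_A\ot\sigma_B)-\log d_A$. Now $\tfrac{1}{d_A}\id_A\ot\sigma_B$ \emph{is} a density operator, so Lemma~\ref{thm7} yields $D_K(\rho_{AB}||\tfrac{1}{d_A}\id_A\ot\sigma_B)\geq 0$, whence $-D_K(\rho_{AB}||\id\ot\sigma_B)\leq\log d_A$. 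This bounds each term of the maximization uniformly in $\sigma_B$ (and the non-maximized definition is just the case $\sigma_B=\rho_B$), so $H_K(A|B)\leq\log d_A$ in both cases.

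The computations are routine once these reductions are set up, and I do not anticipate a genuine obstacle: Properties~\ref{b} and~\ref{c} are precisely tailored to perform the two reductions, and Lemma~\ref{thm7} supplies the final inequality. The only thing requiring care is bookkeeping—never invoking Lemma~\ref{thm7} before the second argument has been brought to unit trace, since $D_K$ on unnormalized operators need not be nonnegative (indeed this is exactly why $H_K(A|B)$ can be positive in general).
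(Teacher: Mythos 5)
Your proposal is correct and follows essentially the same route as the paper: part~\ref{pt1} uses Property~\ref{b} (together with the isometric identification of the $\ket{\psi}$ block, i.e.~Eq.~\eqref{eqn4956}) to reduce to $D_K(\rho_B||\sigma_B)$ and then applies Lemma~\ref{thm7} and Property~\ref{d}, while part~\ref{pt2} rescales by $1/d_A$ via Property~\ref{c} and applies Lemma~\ref{thm7} to the normalized operator $\tfrac{1}{d_A}\id\ot\sigma_B$, exactly as in the paper's proof. Your explicit caution that Lemma~\ref{thm7} only applies to unit-trace second arguments is well placed and consistent with the remark following that lemma.
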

\begin{proof}
  For~\ref{pt1}, note that for such a state,
\begin{eqnarray*}
H_K(A|B)&=&\max_{\sg_B}[-D_K(\dya{\psi}_A\ot\rho_B||\id\ot\sigma_B)]\\
&\eqprop{\ref{b}, \eqref{eqn4956}}{=}&\max_{\sg_B}[-D_K(\rho_B||\sigma_B)]\eqprop{\eqref{eqn13}}{=}0
\end{eqnarray*}
(the case $H_K(A|B)=-D_K(\rho_{AB}||\id\ot\rho_B)$ follows similarly).

For~\ref{pt2}, note that
\begin{eqnarray*}
  D_K(\rho_{AB}||\id\ot\sg_B)&\eqprop{\ref{c}}{=}&\!\!D_K(\rho_{AB} ||\frac{\id}{d_A}\ot\sg_B)+\log\frac{1}{d_A}\\
&\eqprop{\eqref{eqn13}}{\geq}&\!\!\log\frac{1}{d_A}.
\end{eqnarray*}
The result then follows from the definition of $H_K$.  
\end{proof}
Note that if Part~\ref{pt1} of Lemma~\ref{thmpp} 
holds for $H_K(A|B)$, it also holds for the dual entropy, $H_{\widehat K}(A|B)$.  It is
additionally worth noting that if both $H_K$ and $H_{\widehat K}$
satisfy the conditions of Lemma~\ref{thmpp}, 
then it also follows that $H_K(A|B)\geq-\log d_A$.

\begin{lemma}
  Let $H_K(A|B)$ be such that either
  $H_K(A|B)=-D_K(\rho_{AB}||\id\ot\rho_B)$ or
  $H_K(A|B)=\max_{\sigma_B}[-D_K(\rho_{AB}||\id\ot\sigma_B)]$, for all
  $\rho_{AB}$, where $D_K$ satisfies Properties~\ref{a}--\ref{d}.
  Then, for any POVM, $X$, on $A$, we have $H_K(X|B)\geq 0$ with
  equality if the states $\Tr_A(X_j\rho_{AB})$ have orthogonal
  support.  
\end{lemma}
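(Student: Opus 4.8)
The plan is to reduce the general statement to the orthogonal-support case, which I treat first and which simultaneously yields the stated equality condition. Write the post-measurement state as the classical-quantum state $\sigma_{XB}=(\XC\ot\IC)(\rho_{AB})=\sum_j\dya{j}_X\ot\omega_j$, where $\omega_j:=\Tr_A(X_j\rho_{AB})$ and $\sum_j\omega_j=\rho_B$, so that $H_K(X|B)$ is by definition the conditional $K$-entropy of $\sigma_{XB}$.

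First suppose the $\omega_j$ have orthogonal supports, and let $\Pi_j$ be the projector onto $\supp\omega_j$, so that $\Pi_j\Pi_{j'}=\delta_{jj'}\Pi_j$. For an arbitrary $\sigma_B$, I would apply the dephasing TPCPM $\mathcal{P}:\tau\mapsto\sum_j\Pi_j\tau\Pi_j+\Pi_0\tau\Pi_0$ (with $\Pi_0=\id-\sum_j\Pi_j$) on $B$ to both arguments of $D_K(\sigma_{XB}||\id_X\ot\sigma_B)$. Since each $\omega_j$ is supported in block $j$, $\sigma_{XB}$ is left invariant, while $\sigma_B$ is replaced by its block-diagonal part $\hat\sigma_B$; Property~\ref{a} then gives $D_K(\sigma_{XB}||\id_X\ot\sigma_B)\geq D_K(\sigma_{XB}||\id_X\ot\hat\sigma_B)$. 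Both operators are now block diagonal in the joint index $(X{=}i,\text{block }j)$, and $\sigma_{XB}$ is supported only on blocks with $i=j$; invoking Property~\ref{b} to discard the null blocks collapses the right-hand side to $D_K(\Omega||\Sigma)$, where $\Omega:=\bigoplus_j\omega_j$ is a density operator and $\Sigma:=\bigoplus_j\Pi_j\sigma_B\Pi_j$ has trace $Q:=\sum_j\Tr(\Pi_j\sigma_B)\leq1$. Writing $\Sigma=Q\hat\Sigma$ with $\hat\Sigma$ normalized and using Property~\ref{c} together with Lemma~\ref{thm7} gives $D_K(\Omega||\Sigma)=D_K(\Omega||\hat\Sigma)+\log(1/Q)\geq0$. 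As this holds for every $\sigma_B$, the max-form entropy obeys $H_K(X|B)\leq0$; the same block computation with the choice $\sigma_B=\rho_B$ (already block diagonal, with $\Pi_j\rho_B\Pi_j=\omega_j$) yields $D_K(\sigma_{XB}||\id_X\ot\rho_B)=D_K(\Omega||\Omega)=0$ by Property~\ref{d}, so that $H_K(X|B)\geq0$ in the max form and $H_K(X|B)=0$ in the non-max form. Hence $H_K(X|B)=0$ whenever the $\omega_j$ are orthogonal.

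For the general lower bound I would append a classical copy. Define the density operator $\tau_{XX'B}:=\sum_j\dya{j}_X\ot\dya{j}_{X'}\ot\omega_j$, and observe that $\sigma_{XB}=\Tr_{X'}\tau_{XX'B}$, i.e.\ $\sigma_{XB}$ is obtained from $\tau$ by the TPCPM that traces out $X'$, which acts only on the conditioning system. The data-processing inequality (a consequence of Property~\ref{a}) therefore gives $H_K(X|B)\geq H_K(X|X'B)$, the latter computed on $\tau$. But, viewed as a classical-quantum state in $X$ with conditioning system $X'B$, the state $\tau$ has conditional operators $\dya{j}_{X'}\ot\omega_j$, which have orthogonal supports because the $\ket{j}_{X'}$ are orthogonal. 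The orthogonal case above then gives $H_K(X|X'B)=0$, whence $H_K(X|B)\geq0$, as required.

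I expect the main obstacle to be the relative-entropy bookkeeping in the orthogonal case, specifically that after discarding null blocks one is left comparing a normalized $\Omega$ with a possibly sub-normalized $\Sigma$. Since $D_K(S||T)$ need not be non-negative for unnormalized $T$, Lemma~\ref{thm7} cannot be applied directly; the sub-normalization must be peeled off with Property~\ref{c}, whose $\log(1/Q)\geq0$ term is exactly what rescues the inequality. Care is also needed to treat the maximized and non-maximized definitions of $H_K$ uniformly, which is why I would verify both $H_K(X|B)\leq0$ (for every $\sigma_B$) and the exact value $0$ at $\sigma_B=\rho_B$.
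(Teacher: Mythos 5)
Your proof is correct, but it takes a genuinely different route from the paper's for the main inequality. The paper obtains $H_K(X|B)\geq 0$ in one line by invoking Lemma~\ref{thm2}, i.e.\ $H_K(X|B)\geq\log(1/c(X))$ together with $c(X)=\max_j\|X_j\|_\infty\leq 1$; since Lemma~\ref{thm2} is itself a corollary of Theorem~\ref{thm1}, that route leans on the full tripartite machinery (the isometries $V_X$, $V_Z$ and the duality). You instead give a self-contained argument: you settle the orthogonal-support case by dephasing $B$ into the blocks $\Pi_j$, discarding null blocks with Property~\ref{b}, and peeling off the subnormalization $Q\leq 1$ with Property~\ref{c} before applying Lemma~\ref{thm7}, and you then reduce the general case to the orthogonal one by adjoining a classical copy $X'$ and applying data processing on the conditioning system. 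Both arguments are valid. What yours buys is independence from Theorem~\ref{thm1} (no purifications or dual entropies needed) and a more explicit treatment of the equality condition: the paper's ``follows from Property~\ref{b} and Lemma~\ref{thm7}'' implicitly requires exactly the dephasing and subnormalization bookkeeping you spell out, since for the max-form entropy one must show $-D_K(\sigma_{XB}||\id\ot\sigma_B)\leq 0$ for \emph{every} $\sigma_B$, and $D_K$ with an unnormalized second argument need not be non-negative (one uses either your Property~\ref{c} step or~\eqref{eqn9} against the normalized state $\hat\Sigma$). The only loose end is the degenerate case $Q=0$, where Property~\ref{c} does not apply directly; this is a harmless edge case of the same kind the paper itself glosses over (e.g.\ in Lemma~\ref{lem:supp}).
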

\begin{proof}
  That $H_K(X|B)\geq 0$ follows from Lemma~2 of the main article.  The equality
  condition follows from Property~\ref{b} and Lemma~\ref{thm7}. 
\end{proof}

\subsection{Proof of Theorem~1}
\begin{proof}
  The proof uses an idea in~\cite{TomRen2010} to consider the
  measurements in terms of two isometries.  We introduce $V_X$ and
  $V_Z$ as two isometries from $\HC_A$ to either
  $\HC_X\ot\HC_{X'}\ot\HC_{A}$ or $\HC_Z\ot\HC_{Z'}\ot\HC_{A}$
  respectively:
\begin{eqnarray}
  V_X&:=&\sum_j\ket{j}_X\ot\ket{j}_{X'}\ot\sqrt{X_j}\label{eqnten}\\
  V_Z&:=&\sum_k\ket{k'}_Z\ot\ket{k'}_{Z'}\ot\sqrt{Z_k},
\end{eqnarray}
where $\{\ket{j}\}$ is an orthonormal basis on $\HC_X\iso\HC_{X'}$ and
similarly $\{\ket{k'}\}$ is an orthonormal basis on
$\HC_Z\iso\HC_{Z'}$.  We consider first the case where $\rho_{ABC}$ is
pure.  Applying the duality to the (pure) state
$\overline{\rho}_{ZZ'ABC}=V_Z\rho_{ABC}V_Z\ad$, gives $H_{\widehat
  K}(Z|C)=-H_K(Z|Z'AB)$.  Then note that
\begin{widetext}
\begin{eqnarray}
  D_{K}(\overline{\rho}_{ZZ'AB}|| \id_Z \ot \sg_{Z'AB})&\eqprop{\ref{a},\ref{b}}{\geq}&D_{K}(\overline{\rho}_{ZZ'AB}|| V_Z V_Z \ad (\id_Z \ot \sg_{Z'AB}) V_Z V_Z \ad)\nonumber\\
  &\eqprop{\eqref{eqn4956}}{=}&D_{K}(\rho_{AB}|| V_Z \ad (\id_Z \ot \sg_{Z'AB}) V_Z)\nonumber\\
  &\eqprop{\eqref{eqn4956}}{=}&D_{K}(\tilde{\rho}_{XX'AB} || V_X V_Z \ad (\id_Z \ot \sg_{Z'AB}) V_Z V_X\ad)\nonumber\\
  &=&D_{K}\big(\tilde{\rho}_{XX'AB} ||\sum_k V_X (\bra{k'}\ot
  \sqrt{Z_k}\ot \id_B)  \sg_{Z'AB} (\ket{k'}\ot \sqrt{Z_k}\ot \id_B)
  V_X\ad \big)\nonumber\\
  &\eqprop{\ref{a}}{\geq}&D_{K}\big(\tilde{\rho}_{XB} ||\sum_{j,k} \dya{j}\ot \Tr_{Z'A}((\dya{k'}\ot \sqrt{Z_k} X_j \sqrt{Z_k}\ot \id_B) \sg_{Z'AB})\big) \nonumber\\
  &\eqprop{\eqref{eqn9}}{\geq}&D_{K}(\tilde{\rho}_{XB} || c(X,Z) \id_X \ot \Tr_{Z'A}(\sg_{Z'AB})) \nonumber\\
  &\eqprop{\ref{c}}{=}&\log\frac{1}{c(X,Z)}+ D_{K}(\tilde{\rho}_{XB} ||  \id_X \ot \sg_{B}), \label{big_eqn}
\end{eqnarray}
\end{widetext}
where $\tilde{\rho}_{XX'AB} := V_X \rho_{AB}V_X\ad$.  Our use
of~\ref{a} in the first line involved the TPCPM $\rho\mapsto \Pi \rho
\Pi +(\id-\Pi)\rho (\id-\Pi)$ with $\Pi= V_Z V_Z \ad $.

If $H_K(A|B)$ takes the form $-D(\rho_{AB}||\id\ot\rho_B)$, then we
simply take $\sg_{Z'AB}=\rho_{Z'AB}$ in the above analysis to obtain
the uncertainty relation
$$H_K(X|B)+H_{\widehat K}(Z|C) \geq\log\frac{1}{c(X,Z)}.$$
Alternatively, if $H_K(A|B)$ takes the form
$\max_{\sg_B}[-D_K(\rho_{AB}||\id\ot\sg_B)]$, then we take $\sg_{Z'AB}$ to
be the state achieving the optimization for $H_K(Z|Z'AB)$ so that,
using~\eqref{big_eqn} and $H_K(X|B)\geq-D_K(\tilde{\rho}_{XB} || \id_X
\ot \sg_{B})$, we obtain
$$H_K(X|B)+H_{\widehat K}(Z|C) \geq\log\frac{1}{c(X,Z)}.$$

As presented so far, these relations hold for pure $\rho_{ABC}$.  More
generally, for mixed $\rho_{ABC}$, we can introduce a purification,
$\rho_{ABCD}$ and use $H_K(X|B)\geq H_K(X|BD)$, which follows from Property~\ref{a}.
\end{proof}

\subsection{Proof of Lemma~3}

\begin{proof}
As in the main article, let $\XC$ be the TPCPM from $\HC_A$ to $\HC_X$ given by
\begin{equation}
\XC:\rho_A\mapsto\sum_j\dyad{j}{j}_X\Tr(X_j\rho_A),\notag
\end{equation}
where $\{\ket{j}\}$ form an orthonormal basis in $\HC_X$. Then we have
\begin{align}
&D_K(\rho_{AB}||\id\ot\sigma_B)\notag\\
&\eqprop{\ref{a}}{\geq} D_K((\XC\ot\IC)(\rho_{AB})||\XC(\id)\ot\sigma_B) \notag\\
&=D_K(\tilde{\rho}_{XB}||\sum_j\Tr(X_j)\dyad{j}{j}\ot\sigma_B) \notag\\
& \eqprop{\eqref{eqn9}}{\geq} D_K(\tilde{\rho}_{XB}||\max_j\Tr(X_j)\id\ot\sigma_B) \notag\\
& \eqprop{\ref{c}}{=} D_K(\tilde{\rho}_{XB}||\id\ot\sigma_B)+\log (1/c'(X)). \notag
\end{align}
In the case $H_K(A|B)=-D_K(\rho_{AB}||\id\ot\rho_B)$, the result
immediately follows.  In the alternative case, take $\sigma_B$ to be
the state that achieves the optimum in $H_K(A|B)$ to establish the result.
\end{proof}

\ca
\subsection{General decoherence relations}
The following result applies to dual entropy pairs $(H_K,H_{\widehat{K}})$, e.g., of the form $(H,H)$, $(H_{\al},H_{2-\al})$, and $(\widehat{H}_{\min}, \widehat{H}_0)$.
\begin{lemma}\label{thm6}
Let $X=\{X_{j}\}$ be an arbitrary POVM on $A$ and $H_K(A|B)$ be such that, for all $\rho_{AB}$, $H_K(A|B)=-D_K(\rho_{AB}||\id\ot\rho_B)$ where $D_K$ satisfies Properties~\ref{a} and \ref{b}.  It follows that
for all $\rho_{ABC}$
\begin{equation}
\label{eqn10}
 H_{\widehat{K}}(X|C)\geq D_K(\rho_{AB}||\sum_j X_{j}\rho_{AB}X_{j}),
\end{equation}
with equality if $\rho_{ABC}$ is pure and the $X_j$ are projectors.
\end{lemma}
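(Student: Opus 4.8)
The plan is to imitate the isometric dilation from the proof of Theorem~\ref{thm1}, but to halt the chain of relations one step earlier so that the bound is expressed directly in terms of the measured state $\sum_j X_j\rho_{AB}X_j$. First I would treat the case of pure $\rho_{ABC}$ and introduce the Stinespring isometry $V_X:=\sum_j\ket{j}_X\ot\ket{j}_{X'}\ot\sqrt{X_j}$ from $\HC_A$ to $\HC_X\ot\HC_{X'}\ot\HC_A$, together with $\overline{\rho}_{XX'ABC}:=V_X\rho_{ABC}V_X\ad$. A short computation shows that the marginal of this (pure) state on $XC$ is the post-measurement state $\sum_j\dya{j}_X\ot\Tr_A(X_j\rho_{AC})$, so that $\overline{\rho}_{XX'ABC}$ is a valid purification for the definition of the dual and the duality relation yields $H_{\widehat K}(X|C)=-H_K(X|X'AB)$. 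Since the lemma assumes the non-optimized form $H_K(A|B)=-D_K(\rho_{AB}||\id\ot\rho_B)$, this equals $D_K(\overline{\rho}_{XX'AB}||\id_X\ot\overline{\rho}_{X'AB})$, where $\overline{\rho}_{XX'AB}=V_X\rho_{AB}V_X\ad$ and $\overline{\rho}_{X'AB}=\sum_j\dya{j}_{X'}\ot\sqrt{X_j}\rho_{AB}\sqrt{X_j}$.

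The core of the argument is to reduce this relative entropy to $D_K(\rho_{AB}||\sum_j X_j\rho_{AB}X_j)$. Writing $\Pi:=V_XV_X\ad$ for the projector onto the range of $V_X$, the first argument $\overline{\rho}_{XX'AB}$ is supported inside $\Pi$. I would apply Property~\ref{a} with the pinching map $\EC:\omega\mapsto\Pi\omega\Pi+(\id-\Pi)\omega(\id-\Pi)$, which fixes the first argument while block-diagonalizing the second, giving $H_{\widehat K}(X|C)\geq D_K(\overline{\rho}_{XX'AB}||\EC(\id_X\ot\overline{\rho}_{X'AB}))$; Property~\ref{b} then discards the $(\id-\Pi)$-block to leave $D_K(\overline{\rho}_{XX'AB}||\Pi(\id_X\ot\overline{\rho}_{X'AB})\Pi)$. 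The key algebraic identity is $V_X\ad(\id_X\ot\overline{\rho}_{X'AB})V_X=\sum_j X_j\rho_{AB}X_j$ (the square roots recombine into full $X_j$ on both sides), so that $\Pi(\id_X\ot\overline{\rho}_{X'AB})\Pi=V_X(\sum_j X_j\rho_{AB}X_j)V_X\ad$. Isometric invariance, Eq.~\eqref{eqn4956}, then collapses the bound to $D_K(\rho_{AB}||\sum_j X_j\rho_{AB}X_j)$, settling the pure case.

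For general mixed $\rho_{ABC}$ I would purify to $\rho_{ABCD}$, apply the pure-state bound (whose right-hand side depends only on the unchanged $\rho_{AB}$), and then discard $D$ using the data-processing consequence of Property~\ref{a} in the form $H_{\widehat K}(X|C)\geq H_{\widehat K}(X|CD)$. For the equality claim, note that when $\rho_{ABC}$ is already pure the only inequality in the chain is the single use of Property~\ref{a}. When the $X_j$ are (orthogonal) projectors one has $\sqrt{X_j}=X_j$ and $X_jX_{j'}=\delta_{jj'}X_j$, which makes $\Pi$ commute with $\id_X\ot\overline{\rho}_{X'AB}$; the pinching $\EC$ then acts as the identity on both arguments, so that step becomes an equality for every admissible $D_K$, and the bound is attained.

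I expect the main obstacle to be the bookkeeping of the two-register dilation rather than any conceptual difficulty: specifically, verifying the identity $V_X\ad(\id_X\ot\overline{\rho}_{X'AB})V_X=\sum_j X_j\rho_{AB}X_j$ and confirming that $\overline{\rho}_{XX'ABC}$ indeed reduces to the correct marginal on $XC$, so that the duality is applied to a genuine purification. The commutation $[\Pi,\id_X\ot\overline{\rho}_{X'AB}]=0$ in the projector case, which pins down the equality condition, is the other point requiring care.
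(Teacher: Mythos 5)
Your proposal is correct and follows essentially the same route as the paper's own proof: the same two-register isometry $V_X$, the same pinching with $\Pi=V_XV_X^{\dagger}$ combined with Properties (a) and (b), the same algebraic identity $V_X^{\dagger}(\id_X\otimes\tilde{\rho}_{X'AB})V_X=\sum_j X_j\rho_{AB}X_j$ followed by isometric invariance, and the same saturation argument for projective $X$. The only (immaterial) difference is that the paper purifies the mixed state at the outset and discards $D$ via $-H_K(X|X'ABD)\geq -H_K(X|X'AB)$, whereas you prove the pure case first and then apply data processing to $H_{\widehat K}$.
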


\begin{proof}
  Define $V_X$ as in~\eqref{eqnten}. Suppose $\rho_{ABC}$ is a general
  state and let $D$ purify it, so that
  $\tilde{\rho}_{XX'ABCD}=V_X\rho_{ABCD}V_X\ad$ is pure. Then applying
  the duality to this state gives
\begin{align}
 &H_{\widehat K}(X|C)= -H_K(X|X'ABD)\eqprop{\ref{a}}{\geq} -H_K(X|X'AB)\notag\\
 &= D_K(\tilde{\rho}_{XX'AB} || \id_X \ot \tilde{\rho}_{X'AB} ) \notag \\
 &\!\!\!\eqprop{\ref{a},\ref{b}}{\geq} D_K(\tilde{\rho}_{XX'AB} || V_X V_X\ad (\id_X \ot \tilde{\rho}_{X'AB}) V_X V_X\ad) \notag \\
 &\eqprop{\eqref{eqn4956}}{=} D_K(\rho_{AB} || V_X\ad (\id_X \ot \tilde{\rho}_{X'AB}) V_X) \notag \\
 &= D_K(\rho_{AB} || \sum_j (\bra{j}\ot \sqrt{X_j})\tilde{\rho}_{X'AB}  (\ket{j}\ot \sqrt{X_j})) \notag \\
 &= D_K(\rho_{AB} || \sum_j X_j\rho_{AB}X_j), \notag
\end{align}
where $\tilde{\rho}_{X'AB}=\sum_k \dya{k}\ot
\sqrt{X_k}\rho_{AB}\sqrt{X_k}$.  For pure $\rho_{ABC}$, the first
inequality is not needed, and, for projectors $\{X_j\}$, the second
inequality becomes an equality by invoking~\ref{b}.
\end{proof}
\cb

\subsection{Generalizations}
Of the properties we use to prove uncertainty relations,~\ref{c} is
the only one which establishes anything quantitative about the
entropy.  However, there are other entropies that scale differently in
this sense.  These may still satisfy modified uncertainty relations,
as outlined in the following lemma.
\begin{lemma}
\label{thm9}
Let $X=\{X_{j}\}$ and $Z=\{Z_{k}\}$ be arbitrary POVMs on $A$, and
$H_K(A|B)$ be such that either
$H_K(A|B)=-D_K(\rho_{AB}||\id\ot\rho_B)$ or
$H_K(A|B)=\max_{\sigma_B}[-D_K(\rho_{AB}||\id\ot\sigma_B)]$, for all
$\rho_{AB}$, where $D_K$ satisfies
Properties~\ref{a},~\ref{b} and~\ref{d}, and
\begin{equation}\label{eqngfd}
D_K(S||cT)=f(c)D_K(S||T)+g(c)
\end{equation}
for positive operators $S$ and $T$ and constant $c>0$.  It follows
that for all $\rho_{ABC}$
\begin{eqnarray*}
H_K(X|B)\!+\! f(c(X,Z)) H_{\widehat K}(Z|C)\!\!&\geq&\!\! g(c(X,Z))\notag\\
H_K(X|B)\!\!&\geq&\!\! g(c(X))\notag\\
f(c'(X))H_K(X|B)\!\!&\geq&\!\! g(c'(X))+H_K(A|B) \notag
\end{eqnarray*}
\end{lemma}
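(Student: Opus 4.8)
The plan is to treat the three displayed inequalities as the \eqref{eqngfd}-analogues of Theorem~\ref{thm1}, Lemma~\ref{thm2}, and Lemma~\ref{thm3}, and to prove each by re-running the corresponding original proof verbatim, with the single invocation of Property~\ref{c} replaced by the more general identity~\eqref{eqngfd}. Since Property~\ref{c} is the only place where quantitative input about the entropy enters, and every remaining step uses only Properties~\ref{a} and~\ref{b} (together with their consequences~\eqref{eqn9} and~\eqref{eqn4956}, and the support/isometry lemmas, none of which rely on~\ref{c}), the skeleton of each argument is unchanged; only the bookkeeping of the extracted constant changes.

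For the first (tripartite) relation I reproduce the chain~\eqref{big_eqn}. It is identical up to its last line, where Property~\ref{c} was used to pull the constant out of the second argument. Replacing that step by~\eqref{eqngfd} turns $D_K(\tilde{\rho}_{XB}||c(X,Z)\,\id_X\ot\sg_B)$ into $f(c(X,Z))\,D_K(\tilde{\rho}_{XB}||\id_X\ot\sg_B)+g(c(X,Z))$, so the chain terminates in $\geq g(c(X,Z))+f(c(X,Z))\,D_K(\tilde{\rho}_{XB}||\id_X\ot\sg_B)$. Choosing $\sg_{Z'AB}$ to achieve the optimum defining $H_K(Z|Z'AB)$ identifies the left-hand side with $H_{\widehat K}(Z|C)$ via duality, while $H_K(X|B)\geq -D_K(\tilde{\rho}_{XB}||\id_X\ot\sg_B)$ bounds the surviving term. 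Provided $f\geq 0$, so that multiplying this last inequality by $f(c(X,Z))$ preserves its direction, one obtains a relation of the required shape.

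The second relation then follows by specializing the first to $Z=\{\id\}$: here $c(X,\{\id\})=c(X)$ and $H_{\widehat K}(Z|C)=0$ by part~\ref{pt1} of Lemma~\ref{thmpp} (which uses Property~\ref{d} but not~\ref{c}, and whose conclusion transfers to the dual entropy), so the $f$-term vanishes identically and $H_K(X|B)\geq g(c(X))$ survives. For the third relation I re-run the proof of Lemma~\ref{thm3}; its one use of Property~\ref{c} converts $D_K(\tilde{\rho}_{XB}||c'(X)\,\id\ot\sg_B)$, and~\eqref{eqngfd} replaces it by $f(c'(X))\,D_K(\tilde{\rho}_{XB}||\id\ot\sg_B)+g(c'(X))$. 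Taking $\sg_B$ optimal for $H_K(A|B)=-D_K(\rho_{AB}||\id\ot\sg_B)$ and again using $f\geq 0$ with $D_K(\tilde{\rho}_{XB}||\id\ot\sg_B)\geq -H_K(X|B)$ rearranges into $f(c'(X))\,H_K(X|B)\geq g(c'(X))+H_K(A|B)$.

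The step I expect to be the main obstacle is the placement of the factor $f$. The direct generalization of the proof of Theorem~\ref{thm1} attaches $f(c(X,Z))$ to $H_K(X|B)$, whereas the statement carries it on $H_{\widehat K}(Z|C)$; the two forms coincide only when $f\equiv 1$, which is precisely the case of Property~\ref{c}. Obtaining the stated placement requires running the duality step with the dual entropy $H_{\widehat K}$ as the primary quantity — equivalently, interchanging the two measurements using the symmetry $c(X,Z)=c(Z,X)$ — and this is legitimate only if $H_{\widehat K}$ (that is, $D_{\widehat K}$) obeys~\eqref{eqngfd} with the same $f$ and $g$. Verifying this compatibility for the relevant dual pairs, together with the standing assumption $f(c)\geq 0$ needed to carry the factor $f$ through the chain without reversing inequalities, are the points requiring care; everything else is the unaltered skeleton of the original proofs.
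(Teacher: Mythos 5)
Your overall strategy is exactly the paper's: its entire proof of this lemma is the one-line instruction to replace each use of Property~(c) by~\eqref{eqngfd} in the proofs of Theorem~\ref{thm1} and Lemmas~\ref{thm2} and~\ref{thm3}, and you carry out precisely that substitution. Your execution is correct, and the ``obstacle'' you flag at the end is real rather than a defect of your argument. In the chain~\eqref{big_eqn} the constant is pulled out of the \emph{second} argument of $D_K(\tilde{\rho}_{XB}\,||\,c(X,Z)\,\id_X\ot\sg_B)$, so the factor $f(c(X,Z))$ necessarily multiplies $D_K(\tilde{\rho}_{XB}||\id_X\ot\sg_B)\geq-H_K(X|B)$, and (given $f\geq0$) the substitution yields $f(c(X,Z))\,H_K(X|B)+H_{\widehat K}(Z|C)\geq g(c(X,Z))$, i.e.\ with $f$ on the opposite term from the lemma as stated; correspondingly, setting $Z=\{\id\}$ gives $f(c(X))\,H_K(X|B)\geq g(c(X))$, not $H_K(X|B)\geq g(c(X))$. (Note that your own derivation of the second relation quietly invokes the \emph{stated} form of the first, which you have just argued the substitution does not produce --- only the third relation comes out exactly as printed.) That the stated placement is genuinely wrong, and yours right, can be checked on the paper's own Tsallis example: for $\al=2$, two mutually unbiased qubit bases (so $c=1/2$, $f(c)=2$, $g(c)=1$) and a $Z$-eigenstate such as $\ket{+}$, one finds $H_{T,2}(X|B)=1/2$ and $H_{T,0}(Z|C)=0$, so the stated inequality reads $1/2\geq1$ while your form reads $1\geq1$.

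Your proposed repair via the symmetry $c(X,Z)=c(Z,X)$ --- rerunning the argument with $H_{\widehat K}$ as the primary entropy --- would indeed recover the stated placement, but, as you correctly note, only under the additional hypothesis that $H_{\widehat K}$ itself admits a representation $\max_{\sg_B}[-D_{\widehat K}(\cdot\,||\,\id\ot\sg_B)]$ with $D_{\widehat K}$ satisfying the same properties and~\eqref{eqngfd} with the same $f$ and $g$; that hypothesis is not part of the lemma. Likewise your standing assumption $f\geq0$ is genuinely needed to carry the factor through the two inequalities it multiplies, whereas the paper mentions positivity of $f$ and $g$ only as an interpretive afterthought. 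In short: your proof is correct and establishes the substitution-generated inequalities; the mismatch you identify is a defect of the lemma's statement that the paper's one-line proof glosses over.
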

\begin{proof}
  This follows by replacing all uses of~\ref{c} by uses
  of~\eqref{eqngfd} in the proofs of Theorem~1 and
  Lemmas~2 and~3.
\end{proof}
Note that, in order to interpret these inequalities as uncertainty relations in the
usual sense, the functions $f(c)$ and $g(c)$ ought to be positive for
$0<c\leq 1$. (It is clear that setting $f(c)=1$ and $g(c)=\log(1/c)$ recovers our main results.) Furthermore, the same relations hold even if we can only
establish an inequality, $D_K(S||cT)\geq f(c)D_K(S||T)+g(c)$.

As a concrete example, consider the Tsallis relative and conditional
entropies \cite{Rastegin2011}. For positive operators $S$ and $T$, and density operator
$\rho_{AB}$, these can be defined as follows:
\begin{align}
&D_{T, \al}(S||T):=\lim_{\xi\rightarrow 0}\frac{1}{\alpha-1}(\Tr(S^\al (T+\xi\id)^{1-\al})-1) \notag\\
&H_{T,\al}(A|B):= -D_{T,\alpha}(\rho_{AB} || \id\ot \rho_B ) \notag\\
& \widehat {H}_{T,\al}(A|B):= \max_{\sg_B}[-D_{T,\alpha}(\rho_{AB} || \id\ot \sg_B )] \notag
\end{align}
for $\alpha\in(0,1)\cup (1,2]$. These are analogous to the R\'enyi
entropies defined in the main text with $\log(x)$ replaced by
$(x-1)$. Furthermore, they are directly related via
\begin{equation}\label{eqn132}
\log\left((1-\al)H_{T,\al}(A|B)+1\right)=(1-\al)H_\al(A|B).
\end{equation}
The duality relation $H_{T,\al}(A|B)=-H_{T,2-\al}(A|C)$ follows from
the analogous relation for the R\'enyi entropies.

The Tsallis relative entropies inherit Properties~\ref{a},~\ref{b}
and~\ref{d} from the R\'enyi entropies, and also
satisfy~\eqref{eqngfd} with $f(c)=c^{1-\al}$ and
$g(c)=\frac{1-f(c)}{1-\al}$.  They hence satisfy the corresponding
uncertainty relations of Lemma~\ref{thm9}.
We also note that the uncertainty relations for
Tsallis entropies can be shown by rearrangement of those of the
R\'enyi entropies using~\eqref{eqn132}.

\end{document}